\documentclass[11pt]{article} 

\usepackage{fullpage}
\usepackage{times}
\usepackage{epsfig}
\usepackage{amsmath}
\usepackage{amssymb}
\usepackage{amstext}
\usepackage{xspace}
\usepackage{latexsym}
\usepackage{verbatim}
\usepackage{multirow}
\usepackage{float}

\setlength\textheight{8.8in}

\newcommand{\argmax}{\operatorname{argmax}}

\newcommand{\expect}{\mathop{\operatorname{E}}}
\newcommand{\abs}[1]{\lvert{#1}\rvert}

\newcommand{\iid}{{i.~i.~d.\ }}


\newcommand{\val}{v}
\newcommand{\vals}{{\mathbf \val}}
\newcommand{\valsmi}{{\mathbf \val}_{-i}}
\newcommand{\valsmij}{{\vals}_{-ij}}
\newcommand{\vali}[1][i]{{\val}_{#1}}
\newcommand{\valij}[1][ij]{{\val}_{#1}}

\newcommand{\bid}{b}
\newcommand{\bids}{{\mathbf \bid}}

\newcommand{\dist}{F}

\newcommand{\disti}[1][i]{{\dist_{#1}}}

\newcommand{\dens}{f}

\newcommand{\densi}[1][i]{{\dens_{#1}}}

\newcommand{\price}{p}
\newcommand{\prices}{{\mathbf \price}}
\newcommand{\pricei}[1][i]{{\price_{#1}}}

\newcommand{\prob}{q}
\newcommand{\probs}{{\mathbf \prob}}
\newcommand{\probi}[1][i]{{\prob_{#1}}}

\newcommand{\ts}{{\mathbf t}}

\newcommand{\Vick}{\mathcal{V}}
\newcommand{\Mye}{\mathcal M}
\newcommand{\Opt}[1][]{\mathcal{M}_{#1}} 

\newcommand{\Lot}{\mathcal L}
\newcommand{\lotCol}{\mathcal L}
\newcommand{\lotI}[1][i]{{\lotCol}_{#1}}
\newcommand{\lotIJ}[1][ij]{{\lotCol}_{#1}}
\newcommand{\lotto}{\ell}
\newcommand{\I}{\mathcal I}

\newcommand{\Ifc}{\I^{\text{copies}}}
\newcommand{\sets}{{\mathcal J}}

\newcommand{\fudge}{u_{ij}(\valsmij)} 
\newcommand{\assign}{a} 

\newcommand{\LMech}{\mathcal{A}^{\lotCol}}

\newcommand{\MechL}{M^{\lotCol}}

\newcommand{\Mech}{M}

\newcommand{\MechLP}{M^{\lotCol'}}
\newcommand{\LMechP}{\mathcal{A}^{\lotCol'}}

\newcommand{\Rev}{{\mathcal R}}
\newcommand{\RevMye}{{\Rev}^{\Mye}}
\newcommand{\RevAlg}[1][\Alg]{{\Rev}^{#1}}

\newtheorem{theorem}{Theorem}

\newtheorem{claim}{Claim}
\newtheorem{lemma}[theorem]{Lemma}

\newtheorem{proposition}[theorem]{Proposition}
\newtheorem{corollary}[theorem]{Corollary}

\newcommand{\qed}{\mbox{\ \ \ }\rule{6pt}{7pt} \bigskip}

\renewcommand{\comment}[1]{}
\newenvironment{proof}{\noindent{\em Proof:}}{\hfill\qed}

\newcommand{\expand}[1]{\bar{#1}}

\begin{document}
\title{The power of randomness \\ in Bayesian optimal mechanism design}
\author{Shuchi Chawla\thanks{Computer Sciences Dept., University of Wisconsin -
  Madison. \tt{shuchi@cs.wisc.edu}.}
 \and David Malec\thanks{Computer Sciences Dept., University of Wisconsin -
  Madison. \tt{dmalec@cs.wisc.edu}.}
 \and Balasubramanian Sivan\thanks{Computer Sciences Dept., University
 of Wisconsin - Madison. \tt{balu2901@cs.wisc.edu}.}
}

\date{}

\maketitle
\begin{abstract}
We investigate the power of randomness in the context of a fundamental
Bayesian optimal mechanism design problem---a single seller aims to
maximize expected revenue by allocating multiple kinds of resources to
``unit-demand'' agents with preferences drawn from a known
distribution. When the agents' preferences are single-dimensional
Myerson's seminal work \cite{mye-81} shows that randomness offers no
benefit---the optimal mechanism is always deterministic. In the
multi-dimensional case, where each agent's preferences are given by
different values for each of the available services, Briest et
al.~\cite{BCKW-10} recently showed that the gap between the expected
revenue obtained by an optimal randomized mechanism and an optimal
deterministic mechanism can be unbounded even when a single agent is
offered only $4$ services. However, this large gap is attained through
unnatural instances where values of the agent for different services
are correlated in a specific way. We show that when the agent's values
involve no correlation or a specific kind of positive correlation, the
benefit of randomness is only a small constant factor ($4$ and $8$
respectively). Our model of positively correlated values (that we call
additive values) is a natural model for unit-demand agents and items
that are substitutes. Our results extend to multiple agent settings as
well.

\end{abstract}



\section{Introduction}

\begin{figure*}[t]
  \hfill
  \begin{minipage}[c]{.45\textwidth}
    \begin{center}
      \epsfig{file=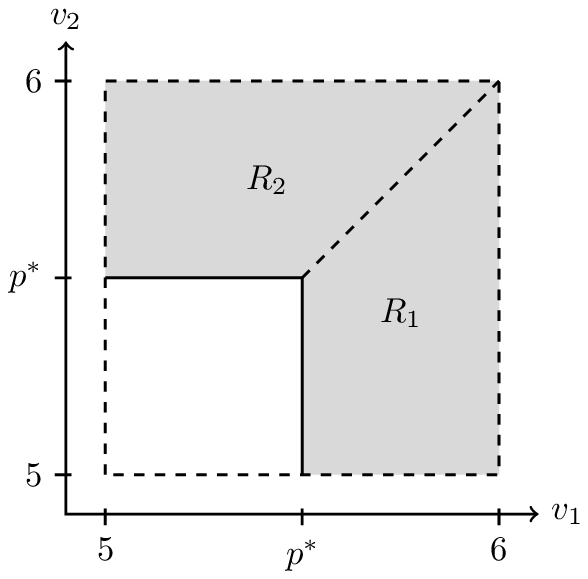}
    \end{center}
  \end{minipage}
  \hfill
  \begin{minipage}[c]{.45\textwidth}
    \begin{center}
      \epsfig{file=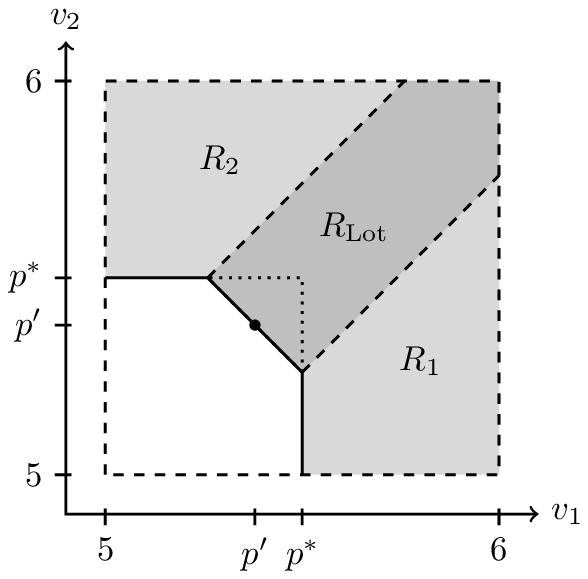}
    \end{center}
  \end{minipage}
\hfill
\caption{An example from \cite{Th-04} contrasting the optimal item pricing and the optimal lottery pricing. The regions $R_1$, $R_2$, and $R_{\text{Lot}}$ denote the sets of valuations at which the agent buys item $1$, item $2$, and the $(1/2,1/2)$ lottery respectively.\label{fig:pricing_vs_lottery}
}
\end{figure*}

A fundamental objective in the design of mechanisms is to
maximize the seller's revenue. In the absence of any information about
buyers' preferences, i.e. in prior-free settings, randomization is a
frequently used algorithmic technique (see, e.g., \cite{AGT-HK} and
references therein); In a spirit similar to randomness in online
algorithm design, it allows the seller to hedge against adversarial
values. While randomization unsurprisingly turns out to be essential
for any guarantees on revenue in certain prior-free settings, it
appears to be not so in Bayesian settings where the designer has
distributional information about the agents' types and the goal is to
maximize revenue in expectation over the distribution. For example,
for a single item auction in the Bayesian setting, Myerson's seminal
work~\cite{mye-81} shows that the optimal mechanism is always a
deterministic one.

In this work we investigate the power of randomness in the context of
the following archetypical multi-parameter optimal mechanism design
problem --- a single seller offers multiple kinds of service, and a
number of ``unit-demand'' agents are each interested in buying any one
of the services. Whereas in Myerson's work each agent has a
single-dimensional type (namely a value for the item under sale), in
our setting each agent has a multi-dimensional type characterized by a
(different) value for each of the services offered by the seller. An
example of such a setting is an online travel agency selling airline
tickets, hotel rooms, etc.  Customers have different preferences over
different available services, but are only interested in buying
one. We study the Bayesian version of this problem: the distribution
from which the buyers' preferences are drawn is known to the
seller. Given Myerson's observation about single-dimensional settings,
one might expect that in the multi-dimensional case the optimal
mechanism (ignoring computational issues) is once again deterministic.
Thanassoulis~\cite{Th-04} and Manelli and Vincent~\cite{MV-06}
independently discovered that this is not the case. This raises the
following natural question: {\em what quantitative benefit do
randomized mechanisms offer over deterministic ones in Bayesian
optimal mechanism design?}

To answer this question we must first understand the structure of
randomized mechanisms in multi-dimensional settings. In the context of
a single unit-demand agent and a seller offering multiple items, any
deterministic mechanism is simply a pricing for each of the items with
the agent picking the one that maximizes her utility (her value for
the item minus its price). Likewise, randomized mechanisms can be
thought of as pricings for distributions or convex combinations over
items. These convex combinations are called {\em lotteries}. A
risk-neutral buyer with a quasiconcave utility function buys the
lottery that maximizes his expected value minus the price of the
lottery.

The following example due to Thanassoulis explains how lotteries
work. Suppose that a seller offers two items for sale to a single
buyer, and that the buyer's value for each of the items is
independently uniformly distributed in the interval $[5,6]$. The
optimal deterministic mechanism for the seller is to simply price each
of the items at $p^*=\$ 5.097$ (see
Figure~\ref{fig:pricing_vs_lottery}).  In a randomized mechanism, the
seller may in addition price a $(1/2,1/2)$ distribution over the two
items at a slightly lower price of $p'=\$5.057$. If the buyer buys
this lottery, the seller tosses a coin and allocates the first item to
her with probability $1/2$ and the second with probability $1/2$. A
buyer that is nearly indifferent between the two items would prefer to
buy the lottery because of its lower cost, than either one of the
items. While the seller loses some revenue by selling the lower priced
lottery with some probability, he gains by selling to a larger segment
of the market (those that cannot afford either of the individual items
but can afford the lower priced lottery). In this example the gain is
more than the loss, so that introducing the lottery improves the
seller's revenue. As this example indicates, lotteries help in optimal
mechanism design by giving the seller more latitude to price
discriminate among buyers with different preferences.

In general, a randomized mechanism can offer to the buyer a menu of
prices for arbitrarily many lotteries. We call such a menu a {\em
  lottery pricing}, and likewise a deterministic pricing an {\em item
  pricing}. While in multiple agent settings randomized mechanisms can
be more complicated, we show that any such mechanism can be
interpreted as offering to each agent simultaneously a lottery pricing
that is a function of values of other agents.

The question of whether and to what extent randomization helps in
Bayesian optimal mechanism design is not merely a pedantic
one. Mechanisms similar to lottery pricings are seen in practice. For
example, the website priceline.com routinely sells airline tickets to
customers without disclosing at the time of sale crucial details such
as the time of travel, carrier, etc. While customers are unaware of
the distribution from which the final service is picked, the tradeoffs
for customers are similar---the uncertainty in the quality of the
final item against the cheaper price. Travel agencies offering
vacation packages use similar devices.

Until recently, the largest gap known between item pricings and
lottery pricings for a single agent was a gap of $3/2$ due to
Pavlov~\cite{Pav-06}; For the special case where values for different
items are independent, Thanassoulis gave the best gap example with a
gap of $1.1$.
Recently Briest et al.~\cite{BCKW-10} showed that in single-agent
settings in fact the gap between lottery pricings and item pricings
can be unbounded even with only $4$ items. However the value
distributions for which such gaps are achieved are quite unnatural
with the values of different items being highly correlated. In this
paper we show that the gap between lottery pricings and item pricings
is small for distributions involving limited correlation between
items.

We further extend these results to the multiple-agent setting with the
seller facing a general feasibility constraint, obtaining the first
results of this kind. Mechanism design in the multiple-agent
multi-parameter setting is poorly understood~\cite{MV-07}. Until recently
there were no general characterizations for optimal or approximately
optimal mechanisms similar to Myerson's for the single-parameter
case. Chawla et al.~\cite{CHMS-10} recently developed constant-factor
approximations to
optimal {\em deterministic} mechanisms in this setting for a certain
class of feasibility constraints (namely matroids and related set
systems). We extend their results to show that their (deterministic)
mechanisms achieve a constant factor approximation with respect to the
optimal randomized mechanism as well, again implying a small gap
between randomized and deterministic mechanisms.

\begin{figure*}[t]
  \hfill
  \begin{minipage}[c]{.45\textwidth}
    \begin{center}
      \epsfig{file=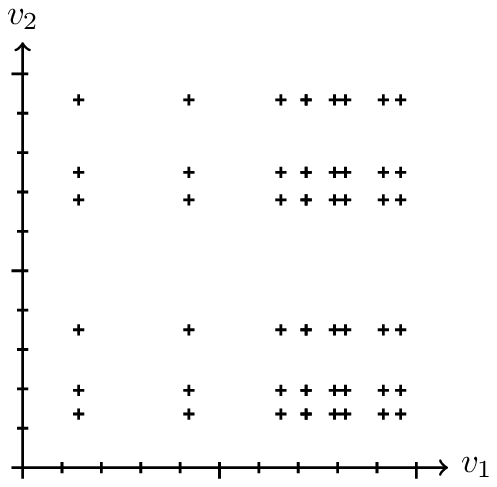}
    \end{center}
  \end{minipage}
  \hfill
  \begin{minipage}[c]{.45\textwidth}
    \begin{center}
      \epsfig{file=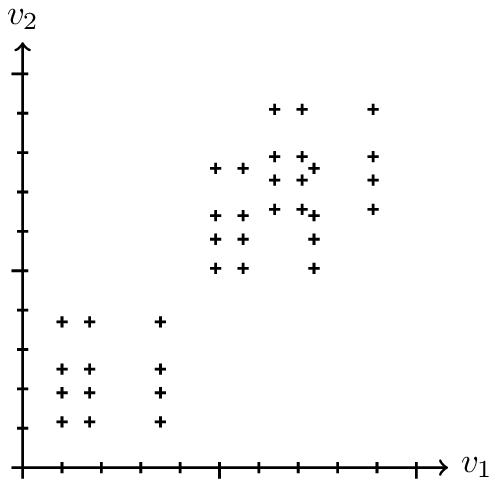}
    \end{center}
  \end{minipage}
\hfill
\caption{An example of a product distribution for valuations
  contrasted against an example of an additive distribution.\label{fig:unif_vs_corr}
}
\end{figure*}

\subsubsection*{Our results and techniques}
We follow a technique introduced in \cite{CHK-07} for relating
multi-parameter mechanisms to mechanisms for a related
single-parameter problem. Chawla, Hartline and Kleinberg~\cite{CHK-07}
relate a single unit-demand agent $m$-item mechanism design problem to
an $m$-agent single-item auction setting, by ``splitting'' the
unit-demand agent into $m$ independent ``copies''. They argue that the
increased competition among copies benefits the seller and leads to
higher revenue. Formally, given an item pricing $\prices$ they
construct a truthful mechanism $A^{\prices}$ that allocates the item
to agent $i$ whenever $\prices$ allocates item $i$ to the
multi-parameter agent (that is, $A^{\prices}$ has the ``same''
allocation rule as $\prices$). They then argue that the price that
$A^{\prices}$ charges is no less than the price that $\prices$ charges
for any instantiation of values. Therefore, {\em the expected revenue
of the optimal multi-parameter mechanism is bounded above by the
expected revenue of Myerson's mechanism for the related
single-parameter problem with copies}. Chawla et al. use this upper
bound to design an item pricing for the multi-parameter problem with
revenue within a factor of $3$ of the expected revenue of Myerson's
mechanism for the instance with copies, thereby obtaining a
$3$-approximation to the optimal deterministic mechanism for the
single-agent problem.

Unfortunately the upper bound of the expected revenue of Myerson's
mechanism does not hold for randomized mechanisms. The appendix gives
an example where the revenue of Myerson's mechanism for the instance
with copies is a factor of $1.13$ smaller than that of the optimal
lottery pricing for the multi-parameter problem. In fact, the
mechanism $A^{\Lot}$ with the ``same'' allocation rule as a lottery
pricing $\Lot$ may obtain zero revenue even when the lottery pricing
obtains non-zero revenue. Our main result is that this gap between
Myerson's mechanism and the optimal lottery pricing is no larger than
a factor of $2$. Specifically, given a lottery pricing, we can
construct two mechanisms, one being $A^{\Lot}$ and the other a Vickrey
auction, such that the sum of the revenues of the two mechanisms is an
upper bound on the revenue of the lottery pricing. Combining this with
the result of Chawla, Hartline and Kleinberg (and an improvement over
it in \cite{CHMS-10}), we get that for a single unit-demand agent
multi-parameter problem, the gap between lottery pricings and item
pricings is at most $4$.

Chawla et al.'s result as well as our factor-of-$4$ gap holds for
instances where the values of the agent for different items are
independent. For a unit-demand agent, this independence assumption is
unrealistic. However, on the other end of the spectrum, Briest et
al. show that with arbitrary correlations between item values, the gap
can be unbounded. We therefore examine the following natural model for
values involving limited correlation. The type of the unit-demand
agent is $m+1$ dimensional --- $(t_0, t_1, \cdots, t_m)$; the agent's
value for item $i$ is $\vali=t_0+t_i$. Here $t_0$ can be thought of as
the buyer's ``base'' value for obtaining any of the items, and the
$t_i$'s represent the buyer's perceived quality of the different
items. This additive value distribution introduces a positive
correlation between values of different items\footnote{This model is
similar to ``multiplicative'' value distributions that have been
studied previously in the context of bundle pricing problems (see,
e.g., \cite{Arm99}).}. Figure~\ref{fig:unif_vs_corr} shows an example
of one such discrete distribution contrasted against a product
distribution.


In this additive distribution setting we show that the gap between
randomized and deterministic mechanisms is at most a factor of
$8$. Once again our approach is to start with an optimal lottery
pricing for the multi-parameter instance, construct an ensemble of
mechanisms based on it for the related single-parameter instance, and
then construct a pricing for the multi-parameter instance based on the
mechanisms.

Our results extend to multi-agent settings as well. The simplest
multi-agent setting we consider involves $n$ agents and $m$ items
(with copies), where the seller faces a supply constraint for each of
the items. A feasible allocation is a matching between agents and
items that respects multiplicities of items. More generally, we
consider settings where the seller faces a matroid feasibility
constraint---any feasible allocation must be an independent set in a
given matroid in addition to allocating at most one item per agent
(see Section~\ref{subsec:multiItemAuctions} for the definition of a
matroid). In both these cases we show that the gap between the
expected revenue of the optimal randomized and the optimal
deterministic mechanisms is a small constant factor. Once again we
rely on the approach of relating the multi-parameter instance to a
single-parameter instance where each unit-demand agent is split into
multiple selfish ``pseudo-agents''. This approach was first developed
in \cite{CHMS-10}. In particular we showed in \cite{CHMS-10} that for
the settings described above, there exist deterministic mechanisms
that obtain revenue within a constant factor of the revenue of
Myerson's mechanism for the related single-parameter instance. In
Section~\ref{sec:multi} we show that the revenue of any randomized
mechanism for these settings can be bounded from above by $5$ times
the revenue of Myerson's mechanism for the single-parameter instance.
The challenge in these settings is to ensure that the mechanisms that
we construct satisfy the non-trivial feasibility constraint that the
seller faces.

\subsubsection*{Related work}
As mentioned earlier, randomness is used extensively in prior-free
mechanism design (see, e.g., \cite{AGT-HK} and references
therein). While symmetric deterministic mechanisms provably cannot
obtain any guarantees on revenue in that setting, Aggarwal et
al.~\cite{AFGHIS-05} show that by exploiting asymmetry prior-free
mechanisms can be derandomized at a constant factor loss in revenue.

Our mechanism design setting with unit-demand agents is closely
related to the standard setting for envy-free pricing problems
considered in literature~\cite{GHKKKM-05, bb-06, BBM08, Bri-06,
CHK-07}; Those works study the single-agent problem with a correlated
value distribution and aim to approximate the optimal deterministic
mechanism (item pricing). Our single-agent setting is most closely
related to the work of Chawla, Hartline and Kleinberg~\cite{CHK-07}
who gave a $3$ approximation to the optimal deterministic mechanism
for single-agent product-distribution instances, and builds upon
techniques developed in that work.

In economics literature, the study of Bayesian optimal mechanisms has
focused on deterministic mechanisms. It is well-known that for
single-parameter instances the optimal mechanism is
deterministic \cite{mye-81, RZ-83}. Following Myerson's
result \cite{mye-81} for single-parameter mechanisms, there were a
number of attempts to obtain simple characterizations of optimal
mechanisms in the multi-parameter setting \cite{MM-88, RC98, MV-07},
however no general-purpose characterization of such mechanisms is
known~\cite{MV-07}. Recently Chawla et al.~\cite{CHMS-10} gave the
first approximations to optimal deterministic mechanisms for a large
class of multi-parameter problems. This paper extends techniques
developed in that work and one of the implications of our work is that
the mechanisms developed in \cite{CHMS-10} are approximately-optimal
with respect to the optimal randomized mechanisms as well.

The study of the benefit of randomness in multi-parameter mechanism
design was initiated by Thanassoulis~\cite{Th-04} who presented
single-parameter instances with valuations drawn from product
distributions where randomness helps increase the revenue by about
$8$-$10\%$. Manelli and Vincent~\cite{MV-06} and Pavlov~\cite{Pav-06}
presented other examples with small gaps. Briest et al.~\cite{BCKW-10}
were the first to uncover the extent of the benefit of
randomization. They showed that lottery pricings can be arbitrarily
better than item pricings in terms of revenue even for the case of $4$
items offered to a single agent.

\section{Definitions and problem set-up}

\newlength{\indentFix}
\setlength{\indentFix}{.8cm}





\subsection{Bayesian optimal mechanism design}
\label{subsec:bayesOPT}
We study the following mechanism design problem. There is one seller
and $n$ buyers (agents) indexed by the set $I$. The seller offers $m$
different services indexed by the set $J$. Agents are risk-neutral and
are each interested in buying any one of the $m$ services. Agent $i$
has value $\vali[ij]$ for service $j$ which is a random variable. We
use $\valsmi$ to denote the vector of values of all agents except
agent $i$. The seller faces no costs for providing service, but must
satisfy certain feasibility constraints (e.g. supply constraints in a
limited supply setting). We represent these feasibility constraints as
a set system $\sets$ over pairs $(i,j)$, that is, $\sets\subseteq
2^{I\times J}$. Each subset of $I\times J$ in $\sets$ is a feasible
allocation of services to agents. 

The seller's goal is to maximize her revenue in expectation over the
buyers' valuations. We call this problem the {\em Bayesian
  multi-parameter unit-demand (optimal) mechanism design} problem
(BMUMD). A deterministic mechanism for this problem maps any set of
bids $\bids$ to an allocation $M(\bids)\in \sets$ and a pricing
$\pi(\bids)$ with a price $\pi_i$ to be paid by agent $i$. A
randomized mechanism maps a set of bids to a distribution over
$\sets$; we use $M(\bids)$ to denote this distribution over $I\times
J$. 

We focus on the class of incentive compatible mechanisms and will
hereafter assume that $\bids=\vals$. We use $\RevAlg[\Mech](\vals)$ to
denote the revenue of a mechanism $\Mech$ at valuation vector $\vals$:
$\RevAlg[\Mech](\vals) = \sum_{i\in I} \pi_i(\vals)$ where $\pi$ is
the pricing rule for $\Mech$.  To aid disambiguation, we sometimes use
$\RevAlg[\Mech]_i(\vals)$ to denote $\pi_i(\vals)$ for $\Mech$.  The
expected revenue of a mechanism is $\RevAlg[\Mech] =
\expect_{\vals}[\RevAlg[\Mech](\vals)]$.

\addtolength{\leftmargini}{\indentFix}

We consider the following special cases of the BMUMD:
\begin{itemize}
\item[\bf Setting 1: ] {\bf Single agent with independent values}. The
  agent values item $j$ at $\vali[j]$, which is an independent random
  variable with distribution $\disti[j]$ and density $\densi[j]$.
\item[\bf Setting 2: ] {\bf Single agent with additive values}. There are $m$ items,
  and the agent's type, $\{t_0,\cdots,t_m\}$, is $m+1$
  dimensional. $t_j$ is distributed independently according to
  $\disti[j]$. The agent's value for item $j$ is $\vali[j]=t_0+t_j$.
\item[\bf Setting 3: ] {\bf Multiple agents and multiple items with independent
  values}. There are $n$ agents and $m$ items. Agent $i$'s value for
  item $j$, $\vali[ij]$, is distributed independently according to
  $\disti[ij]$. Any matching between items and agents is feasible.
\item[\bf Setting 4: ] {\bf Multiple unit-demand agents with matroid
  feasibility constraint}. There are $n$ agents and $m$
  services. Agent $i$'s value for item $j$, $\vali[ij]$, is
  distributed independently according to $\disti[ij]$. The set system
  $\sets$ is an intersection of a matroid with the unit-demand
  constraints for the agents and is thus a generalization of the
  previous matching setting. (See Section~\ref{subsec:multiItemAuctions} 
for the definition of
  a matroid.)
\end{itemize}
\addtolength{\leftmargini}{-\indentFix}


\subsubsection*{Single-parameter mechanism design}
The single-parameter version of the Bayesian optimal mechanism design
problem (abbreviated BSMD) is stated as follows. There are $n$
single-parameter agents and a single seller providing a certain
service. Agent $i$'s value $\vali$ for getting served is a random
variable. We use $\valsmi$ to denote the vector of values of all
agents except agent $i$. The seller faces a feasibility constraint
specified by a set system $\sets\subseteq 2^{[n]}$, and is allowed to
serve any set of agents in $\sets$. As in the multi parameter case, a
mechanism $M$ for this problem is a function that maps a vector of
values $\vals$ to an {\em allocation} $M(\vals)\in \sets$ and a {\em
  pricing} $\pi(\vals)$. Myerson's seminal work describes the revenue
maximizing mechanism for BSMD; this optimal mechanism is
deterministic.

\subsection{Relating multi-parameter MD to single-parameter MD}
\label{subsec:md-redn}

In previous work~\cite{CHMS-10} we presented a general reduction from
the multi-parameter optimal mechanism design problem to the
single-parameter setting. This approach begins with defining an
instance $\Ifc$ of the BSMD given an instance $\I$ of the BMUMD. Our
previous work then shows that for several kinds of feasibility
constraints there exists a deterministic mechanism for $\I$ with
revenue at least a constant fraction of that of the optimal mechanism
for $\Ifc$. We state these results below without proof.

We begin by describing the instance $\Ifc$. Let $\I$ be an instance of
the BMUMD with $n$ agents and a single seller providing $m$ different
services, and with feasibility constraint $\sets$. We define a new
instance of the BSMD in the following manner. We split each agent in
$\I$ into $m$ distinct agents (hereafter called ``copies'' or
``pseudo-agents''). Each pseudo-agent is interested in a single item
$j\in [m]$ and behaves independently of (and potentially to the
detriment of) other pseudo-agents. Formally, the instance has $mn$
distinct pseudo-agents each interested in a single service;
pseudo-agent $(i,j)$'s value for getting served, $\vali[ij]$, is
distributed according to $\disti[ij]$. The mechanism again faces a
feasibility constraint given by the set system $\sets$.

$\Ifc$ is similar to $\I$ except that it involves more competition
(among different pseudo-agents corresponding to the same
multi-parameter agent). Therefore it is natural to expect that a
seller can obtain more revenue in the instance $\Ifc$ than in
$\I$. The following results show that in Settings 1 and 3 it cannot
obtain too much more.


\begin{theorem}(Theorem 4 and 10 in \cite{CHMS-10})
\label{thm:bmumdPricing}
Given an instance $\I$ of the single agent BMUMD (Setting 1), there
exists a truthful deterministic mechanism for $\I$, whose revenue is
at least 1/2 of the revenue of any truthful mechanism for the instance
$\Ifc$.
\end{theorem}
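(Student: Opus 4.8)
The plan is to reduce the single-agent multi-parameter problem to the related single-parameter instance $\Ifc$ by exploiting the "copies" construction of Chawla, Hartline and Kleinberg~\cite{CHK-07}. Since the statement concerns the \emph{optimal} mechanism for $\Ifc$, and Myerson's theory tells us this optimum is deterministic and is characterized entirely by virtual values, I would first invoke Myerson's mechanism for $\Ifc$ as the benchmark. For Setting~1 the instance $\Ifc$ consists of $m$ independent single-parameter pseudo-agents, each wanting a distinct item, with the constraint that at most one can be served (the unit-demand constraint carries over as a single-item auction among the copies). Myerson's mechanism for this instance serves the copy with the highest nonnegative ironed virtual value, if any.

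First I would set up the standard correspondence: given the optimal (Myerson) mechanism for $\Ifc$, read off its allocation rule and its per-copy payments, and recall that because the values are independent across copies, the expected revenue decomposes cleanly in terms of the virtual surplus. The key construction is to build a deterministic item pricing for the single multi-parameter agent that mimics Myerson's allocation: item $j$ is offered at a price equal to the threshold bid (the Myerson reserve / critical value) that copy $j$ would have to clear in order to win. The heart of the argument is then to show that whenever Myerson's mechanism for $\Ifc$ serves copy $j$ and collects some payment, the multi-parameter agent facing this item pricing will purchase \emph{some} item and pay at least the corresponding threshold; the factor of $1/2$ arises because in the single-item setting only one copy is served at a time, whereas the pricing must be simultaneously incentive-compatible against an agent who is comparing all items at once.

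The main obstacle, and the place where the factor of $2$ is lost, is handling the event that multiple copies clear their thresholds simultaneously. In $\Ifc$ the seller extracts revenue from only the single highest-virtual-value copy, but when we translate to an item pricing for one unit-demand agent, the agent with high values on several items will buy only her favorite, and the price she pays need not match Myerson's payment for the copy that would have won. The standard fix (following~\cite{CHK-07,CHMS-10}) is to use a prophet-inequality or threshold argument: one chooses item thresholds so that the expected payment collected by the item pricing, summed over the event that the agent buys, is at least half the expected Myerson revenue. Concretely, I would bound the probability that the agent's favorite item is priced out against the probability that \emph{some} item is purchased, and charge the appropriate threshold, using the independence of the $\vali[j]$ to control these events.

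Because the theorem is quoted verbatim from~\cite{CHMS-10} (``Theorem 4 and 10''), I would expect the cleanest route is simply to cite that prior work rather than reprove it; but if a self-contained proof is wanted, the sketch above---Myerson benchmark on $\Ifc$, threshold-based item pricing for $\I$, and a prophet-style inequality to recover the factor $1/2$---is the natural path. The delicate step is verifying incentive compatibility and the matching feasibility: one must check that the constructed pricing never allocates more than the single unit the unit-demand agent demands, which is automatic here since an item pricing serves at most one item, and that the threshold prices are well defined given ironing of virtual values. I would flag the simultaneous-clearing event as the single point where care is needed and where the constant $2$ is unavoidable in general.
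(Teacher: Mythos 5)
Your proposal matches the paper's own treatment: the paper states this theorem \emph{without proof}, importing it verbatim from \cite{CHMS-10} (Section~\ref{subsec:md-redn} explicitly says ``We state these results below without proof''), which is exactly the route you identify as the cleanest one. Your sketch of the underlying argument---a prophet-inequality-style threshold pricing for $\Ifc$, translated into an item pricing for $\I$ while handling the mismatch between the agent's utility-maximizing item and the copy Myerson would serve---is also a faithful outline of how the cited work actually proves it.
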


\begin{theorem}(Theorem 14 in \cite{CHMS-10})
\label{thm:bmumdOPM}
Given an instance $\I$ of the BMUMD with multiple agents and multiple
items (Setting 3), there exists a truthful deterministic mechanism for
$\I$, whose revenue is at least $4/27$th of the revenue of any
truthful mechanism for the instance $\Ifc$.
\end{theorem}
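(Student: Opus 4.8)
The plan is to establish this revenue upper bound by the posted-pricing route, as in Theorem~\ref{thm:bmumdPricing}, but now handling the matching constraint of $\Ifc$ with a sequential mechanism. First I would fix an optimal truthful mechanism $\Mech^*$ for $\Ifc$; by Myerson's characterization it may be taken deterministic, with expected revenue equal to the expected ironed virtual surplus $\expect_{\vals}[\sum_{(i,j)\in \Mech^*(\vals)} \ivirti[ij](\vali[ij])]$ of the served pairs, which always form a matching in $\sets$. For each pseudo-agent $(i,j)$ let $\probi[ij]$ be the probability that $\Mech^*$ serves it. Because the served set is always a matching, these marginals satisfy $\sum_i \probi[ij] \leq 1$ for every item $j$ and $\sum_j \probi[ij] \leq 1$ for every agent $i$; as I explain below, these two families of constraints are exactly what make the construction work.

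Next I would build a deterministic, dominant-strategy truthful mechanism $\Alg$ for $\I$ of posted-price type. For each pair $(i,j)$ I post a fixed item price $\pricei[ij]$ derived from the ironed-virtual-value threshold of $\Mech^*$, and I process the agents in a fixed order; on arrival, agent $i$ is offered every item still unsold at its price $\pricei[ij]$ and buys the affordable item maximizing her utility $\vali[ij] - \pricei[ij]$. Since prices are posted in advance and each item sells at most once, $\Alg$ is truthful, deterministic, and always outputs a matching, hence feasible for $\I$.

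The core of the argument is the inequality $\RevAlg[\Alg] \geq (4/27)\,\RevAlg[\Mech^*]$, proved by charging the contribution of each pair $(i,j)$ to $\RevAlg[\Mech^*]$ against the payment $\Alg$ collects when it sells item $j$ to agent $i$. This intended sale goes through whenever (a) item $j$ is still available when $i$ arrives, (b) agent $i$ has not already bought another item, and (c) $\vali[ij] \geq \pricei[ij]$; and whenever $i$ instead buys a strictly more profitable item the revenue only improves, so such deviations may be ignored in the charging. Balancing the prices so that each item sells, and each agent buys, with probability at most a parameter $x$ turns the marginal constraints $\sum_i \probi[ij]\leq 1$ and $\sum_j \probi[ij]\leq 1$ into union bounds: the probability that item $j$ has been claimed by an earlier agent is at most $x$, and likewise the probability that agent $i$ is already matched, so events (a) and (b) each hold with probability at least $1-x$. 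The revenue retained after raising prices to meet this budget scales like $x$, so the product $x(1-x)^2$ governs the guarantee; maximizing it over $x\in[0,1]$ gives its maximum $4/27$ at $x=1/3$, which fixes the price scaling and yields the claimed bound.

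The step I expect to be the main obstacle is controlling the two coupled availability events simultaneously. Unlike the single-agent case of Theorem~\ref{thm:bmumdPricing}, where there is no competition among agents, and unlike a single-item auction, where only the item side is constrained, here the matching constraint activates both the item-side event (a) and the agent-side event (b), and these are correlated through the shared randomness of the valuations and the order-dependent dynamics of the sequential process. Making the charging rigorous requires a coupling that decouples the two availability events enough to extract the clean $(1-x)^2$ factor without invoking independence the process does not provide, while simultaneously ensuring that agents who deviate to a different item never cause a net revenue loss.
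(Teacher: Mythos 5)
A preliminary remark: the paper you were given never proves this statement; it is imported verbatim from prior work (it is Theorem 14 of \cite{CHMS-10}), and Section~\ref{subsec:md-redn} explicitly says these results are stated without proof. So the comparison below is against the proof in \cite{CHMS-10}, whose architecture your proposal essentially reproduces: a posted-price mechanism for $\I$ with per-pair prices $p_{ij}$ set at scaled-down quantiles of the serving probabilities $q_{ij}$ of Myerson's mechanism for $\Ifc$; the two marginal constraints $\sum_j q_{ij}\le 1$ and $\sum_i q_{ij}\le 1$ coming from the matching structure; a ``revenue scales like $x$'' step (whose rigorous form is concavity of the ironed revenue curve, $\bar R_{ij}(x q_{ij}) \ge x\,\bar R_{ij}(q_{ij})$, so the price at quantile $x q_{ij}$ still recovers an $x$ fraction of Myerson's per-pair contribution); and the optimization of $x(1-x)^2$ at $x=1/3$ giving $4/27$. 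That is the right skeleton and the right constant.

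The genuine gap is precisely the step you flag and leave unresolved: extracting the factor $(1-x)^2$ when the two availability events --- item $j$ unsold when $i$ arrives, agent $i$ still unmatched --- are correlated through the dynamics of the sequential process. The known resolution is not a coupling of these dynamic events but their replacement by static ``desire'' events: let $E_{ij}$ be the event $v_{ij}\ge p_{ij}$, and lower-bound the probability that the mechanism sells $j$ to $i$ by $\Pr[E_{ij}]\cdot\Pr[\text{no } j'\neq j \text{ has } E_{ij'}]\cdot\Pr[\text{no } i'\neq i \text{ has } E_{i'j}]$. These three events involve disjoint sets of mutually independent valuations (this is exactly where Setting 3's independence assumption enters), so they are genuinely independent; each of the last two has probability at least $1-x$ by a union bound against the marginal constraints; and on their intersection the sale of $j$ to $i$ is forced under \emph{every} processing order, since no earlier agent can have bought item $j$ (buying requires desiring) and agent $i$ can afford no item other than $j$. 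This substitution also repairs the other flaw in your write-up: the claim that a deviation of agent $i$ to a ``more profitable'' item $j'$ only improves revenue is unjustified --- the agent maximizes her own utility, not the seller's price, and $p_{ij'}$ may be far smaller than $p_{ij}$. In the correct accounting no such claim is needed: the event crediting the pair $(i,j)$ already excludes agent $i$ desiring any other item, and for a fixed $i$ these events over different $j$ are disjoint, so no revenue is double-counted.
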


In Setting 4, \cite{CHMS-10} obtain a somewhat weaker result comparing
the revenue of an incentive-compatible mechanism for $\Ifc$ to that of
a deterministic mechanism for $\I$ that is not truthful but is an {\em
  implementation in undominated strategies} \cite{BLP09}. Formally,
for an agent $i$, a strategy $s_i$ is said to be dominated by a
strategy $s'_i$ if for all strategies $s_{-i}$ of other agents, the
utility that $i$ obtains from using $s_i$ is no better than that from
using $s'_i$, and for some strategy $s_{-i}$, it is strictly worse. A
mechanism is an algorithmic implementation of an
$\alpha$-approximation in undominated strategies if for every outcome
of the mechanism where every agent plays an undominated strategy, the
objective function value of the mechanism is within a factor of
$\alpha$ of the optimal, and every agent can easily compute for any
dominated strategy a strategy that dominates it.

\begin{theorem}(Theorem 17 in \cite{CHMS-10})
\label{thm:bmumdSPM}
Given an instance $\I$ of the BMUMD with unit-demand agents and a
general matroid constraint (Setting 4), there exists a deterministic
mechanism for $\I$ implemented in undominated strategies, whose
revenue is at least $1/8$th of the revenue of any truthful mechanism
for the instance $\Ifc$.
\end{theorem}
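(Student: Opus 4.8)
The plan is to exhibit a deterministic mechanism for $\I$ that recovers at least $1/8$ of the revenue of Myerson's optimal truthful mechanism $\Mye$ for $\Ifc$; since $\Mye$ is revenue-optimal among all truthful mechanisms for the single-parameter instance $\Ifc$, dominating it by a factor of $1/8$ suffices to dominate \emph{any} truthful mechanism for $\Ifc$, which is what the statement asks. The mechanism $\Mye$ has the standard single-parameter structure: it selects the feasible set $S\in\sets$ maximizing the total ironed virtual value $\sum_{(i,j)\in S}\ivirti[ij](\vali[ij])$, charges each served pseudo-agent $(i,j)$ its threshold (critical) value, and has expected revenue $\expect_{\vals}[\sum_{(i,j)\in S}\ivirti[ij](\vali[ij])]$, the expected ironed virtual surplus of its allocation. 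This surplus is the benchmark I would try to approximate by a pricing on $\I$.

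First I would convert the thresholds of $\Mye$ into posted prices for $\I$. For each pair $(i,j)$ I would post a price $\pricei[ij]$ placed at the value where pseudo-agent $(i,j)$'s probability of being served by $\Mye$ crosses a chosen balancing quantile, so that $\Pr[\vali[ij]\ge \pricei[ij]]$ matches the ex-ante service probability of that copy up to the balancing constant. This is the mechanism-design analogue of the Chawla--Hartline--Kleinberg reduction used in Theorem~\ref{thm:bmumdPricing}: pricing at the right quantile ensures that, at any value vector, the price collected from a sold item lower-bounds a constant fraction of the ironed virtual value that $\Mye$ would have extracted from the corresponding copy, and simultaneously controls the gap created by the unit-demand restriction (the real agent $i$ buys a single item, whereas up to $m$ of her copies may be served in $\Ifc$) by the same factor of $2$ as in Theorem~\ref{thm:bmumdPricing}.

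Next I would run these prices as a sequential posted-price mechanism on $\I$: process the agents, and offer each unit-demand agent $i$ the menu of items $j$ at prices $\pricei[ij]$, restricted to those $j$ whose inclusion keeps the running allocation independent in the matroid part of $\sets$; agent $i$ then buys the affordable available item maximizing her utility, so the unit-demand constraint is honored automatically and the matroid constraint is enforced greedily. The online nature of committing items under the matroid is then controlled by a matroid prophet inequality: greedily accepting elements above their posted thresholds captures a constant fraction of the offline virtual surplus of $\Mye$. Combining this loss with the factor-$2$ unit-demand/quantile loss from the previous step yields the claimed factor of $1/8$.

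The main obstacle is maintaining matroid feasibility online while preserving incentives. Because enforcing the constraint may require the mechanism to decline an affordable purchase to leave room for a feasible completion, and because there is no evident truthful rule for choosing among maximal feasible completions for a \emph{general} matroid, the direct revelation mechanism need not be truthful; this is exactly why the guarantee is stated for an implementation in undominated strategies rather than a truthful one. I would close the argument by showing that, for each agent, declaring her most-preferred affordable available items in preference order dominates every alternative strategy (skipping a utility-improving affordable purchase, or declaring an item of negative utility, is dominated), so under any undominated-strategy profile the realized allocation and payments are precisely those to which the quantile and prophet-inequality accounting applies, giving the $1/8$ bound under every undominated play.
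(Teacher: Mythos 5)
You should first note that this paper never proves Theorem~\ref{thm:bmumdSPM}: it is imported verbatim from \cite{CHMS-10} (``We state these results below without proof''), so your attempt has to be judged against the proof of Theorem~17 in that paper. Your overall architecture does match that proof in spirit---prices for the copies derived from the quantiles at which Myerson's mechanism for $\Ifc$ serves each pseudo-agent, a sequential mechanism offering each unit-demand agent a feasibility-pruned menu, and a retreat to implementation in undominated strategies. But the quantitative core of your argument conflates two incompatible techniques. You fix prices at revenue-balancing quantiles (the Chawla--Hartline--Kleinberg device) and then invoke ``a matroid prophet inequality'' to absorb the online loss. A prophet inequality bounds expected \emph{surplus}, not revenue; it comes with its own (adaptive) threshold rule, so its guarantee does not transfer to your quantile prices without a new argument; and the clean single-matroid version does not even apply, since the constraint in $\Ifc$ is the \emph{intersection} of the matroid $\sets_1$ with the unit-demand partition matroid $\sets_2$, where the known prophet constants are strictly worse. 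To turn any surplus-type prophet guarantee into a revenue guarantee one must additionally run the thresholds in ironed-virtual-value space and use the identity that a truthful single-parameter mechanism's expected revenue equals its expected ironed virtual surplus; that translation is absent from your write-up. (The matroid prophet inequality also postdates this paper, whereas \cite{CHMS-10} proves the needed bound directly via matroid exchange arguments, showing each copy is still servable with constant probability when its agent is reached, and composes explicit factors of $2$ to reach $8$.) Relatedly, your arithmetic is never carried out: a ``factor-$2$ unit-demand/quantile loss'' times a prophet constant of $2$ gives $4$, not $8$, and with the correct matroid-intersection constant it gives yet another number; nothing in the proposal certifies the stated factor.

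Second, your incentive argument undercuts itself. In the mechanism you describe---a fixed processing order, menus of still-feasible items at posted prices, buyers choosing myopically---myopic purchase \emph{is} a (weakly) dominant strategy, since an agent's choice has no effect on her own future; so your mechanism would be truthful, and the weaker solution concept would be unnecessary. The reason \cite{CHMS-10} must settle for undominated strategies is precisely that, for a general matroid, their revenue analysis requires processing pseudo-agents in a value-dependent order; once the order (hence the menus) depends on reports, truthfulness fails. Your proposal neither specifies such an order nor proves the survival/revenue bound for an arbitrary value-independent order---and the failure of order-oblivious pricing for general matroids is exactly what makes Setting~4 harder than Setting~3. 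So the two load-bearing steps---the constant-factor bound under the matroid-intersection constraint, and the reason only undominated-strategy implementation is achievable---are both missing.
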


\section{Lotteries and randomized mechanisms}

We now define a class of mechanisms for the BMUMD that will be useful
in our analysis. The following subsection shows that this class
encompasses arbitrary randomized mechanisms.

\subsection{Lotteries or random allocations}

An $m$-dimensional {\em lottery} is a vector
$\lotto=(q_1,\cdots,q_m,p)$ where $p$ is the price of the lottery and
$(q_1,\cdots,q_m)$ is a probability distribution over $m$ items,
$\sum_{j\in [m]} q_j\le 1$. A {\em lottery pricing}
$\Lot=\{\lotto_1,\lotto_2,\cdots\}$ is a randomized selling mechanism
for $m$ items targeted towards a single unit-demand buyer where the
buyer is offered a collection of (an arbitrary number of)
lotteries. The buyer can select any one or no lottery from the
collection, and is then allocated an item drawn from the probability
distribution defined by the lottery and charged the price of the
lottery. A rational risk-neutral buyer selects the lottery that
maximizes her utility: $\sum_{j\in [m]} \probi[j]\vali[j] - p$.

A {\em lottery-based mechanism} $M^{\lotCol}$ for $m$ services
targeted towards $n$ agents is a randomized selling mechanism defined
through an ensemble of lottery pricings $\lotCol$. $M^{\lotCol}$ and
$\lotCol$ satisfy the following properties:
\begin{enumerate}
\item For every instantiation of values of the agents $\vals$,
$\lotCol$ contains $n$ lottery pricings,
$\Lot_1(\vals), \cdots, \Lot_n(\vals)$, where $\Lot_i(\vals)$ is an
$m$-dimensional lottery pricing targeted toward agent $i$.
\item $\Lot_i(\vals)$ is a function of $\valsmi$, the values of all
agents other than agent $i$.
\item The mechanism $M^{\lotCol}$ is implemented as follows. It first
elicits bids $\bids$ from agents, and then offers to agent $i$
(simultaneously with other agents) the lottery pricing
$\Lot_i(\bids)$. Let $\lotto_i(\bids)$ denote the lottery picked by agent
$i$ and let $\probi[ij](\bids)$ denote the probability with which
lottery $\lotto_i(\bids)$ offers service $j$ to agent $i$. Agent $i$ is
allocated item $j$ with probability $\probi[ij](\bids)$.\footnote{Note
that these allocations to agents are not necessarily done
independently; The feasibility constraint may require correlations
between items allocated to different agents. However these details do
not affect our analysis, so we ignore them.}
\item The probabilities $\probi[ij](\vals)$ satisfy the following
feasibility constraint:
\begin{align*}
\sum_{(i,j)\in S} \probi[ij](\vals) \leq r(S),\ \forall
S \subseteq I\times J,\ \forall \vals
\end{align*}
where $r(S)$ is the cardinality of some maximum sized feasible subset
of $S$.\footnote{This condition is weaker than may be necessary for
  certain kinds of set systems, but suffices for our purpose.}
\end{enumerate}

\subsection{Randomized mechanisms as lotteries}
\label{subsec:randMechAsLotteries}
We now show that every truthful randomized mechanism for the BMUMD can
be interpreted as a truthful lottery-based mechanism.

\begin{lemma}\label{lem:randMechLotRedn}
Every incentive-compatible randomized mechanism for a multi-agent
BMUMD problem is equivalent to a lottery-based mechanism.
\end{lemma}
\begin{proof}
Given a mechanism $M$ with randomized allocation rule $M(\vals)$ and
pricing rule $\pi(\vals)$ we define a lottery-based mechanism as
follows. Consider an agent $i$ and a fixed instantiation of
$\valsmi$. Then for every instantiation of $\vals_i$, consider the
probabilities with which $M$ allocates service $j$ to agent $i$, as
well as the prices that $M$ charges. Each such probability vector
along with the corresponding price forms a lottery in
$\Lot_i(\valsmi)$ in the new mechanism. Formally, $\Lot_i(\valsmi) =
\{(\probs_i,\pricei)\ |\ \exists \vals_i \text{ with } \probs_i =
M_i(\valsmi,\vals_i) \text{ and } \pricei =
\pi_i(\valsmi,\vals_i)\}$.

We now claim that the allocation rule and pricing rule of the new
mechanism is precisely the same as the old mechanism. Suppose
not. Then at some valuation vector $\vals$ and for some agent $i$,
$(\probs_i(\valsmi,\vals_i), \pricei(\valsmi,\vals_i)) \ne
(M_i(\valsmi,\vals_i), \pi_i(\valsmi,\vals_i))$, where the former is
the allocation and price rule for the lottery-based mechanism and the
latter the allocation and price rule for the original mechanism
$M$. But, given our construction, $(\probs_i(\valsmi,\vals_i),
\pricei(\valsmi,\vals_i)) = (M(\valsmi,\vals'_i),
\pi_i(\valsmi,\vals'_i))$ for some other value vector $\vals'_i$ for
agent $i$. But this implies that in $M$ agent $i$ can benefit from
lying and reporting $\vals'_i$ when the true value vector is
$\vals$. This contradicts the incentive compatibility of $M$.
\end{proof}

\subsection{A mechanism for $\Ifc$ based on lotteries}
\label{subsubsec:copyMech}
As noted earlier, our main technique is to relate the revenue of
lottery-based mechanisms for an instance $\I$ of the BMUMD to the
optimal mechanism for a related instance $\Ifc$ of the BSMD. We now
describe a mechanism for $\Ifc$ based on a given lottery-based
mechanism for $\I$.

Consider an instance $\I$ of the BMUMD. Given a lottery-based
mechanism $\MechL$ for $\I$ that uses the ensemble of lottery pricings
$\lotCol$, we define a mechanism $\LMech$ for the instance $\Ifc$.

Based on $\lotCol$, the mechanism $\LMech$ forms a one dimensional
lottery pricing for each of the $mn$ pseudo-agents. The lottery
pricing offered to pseudo-agent $(i,j)$, which we denote $\lotIJ$, is
a function of $\valsmij$ and is derived from the lottery pricing
$\lotI\in\lotCol$ as follows.  Given a valuation vector $\valsmij$,
for each $\lotto=(\probi[i1], \probi[i2],\dots,
\probi[im],\price)\in\lotI(\valsmi)$, 
$\LMech$ adds a lottery $\lotto_j=(\prob',\price')$ to $\lotIJ$
defined by
\begin{align*}
\prob' &= \probi[ij]\text{; and}\\
\price'&= \price - \sum_{k\neq j}\probi[ik]\vali[ik] +\fudge,
\end{align*}
where the term $\fudge\ge0$ is chosen to be the least
value ensuring that the lottery preferred by pseudo-agent $(i,j)$ when
$\valij=0$ (if any) has a non-negative price.

We note the following properties of $\LMech$:
\begin{enumerate}
\item (truthfulness) That $\LMech$ is truthful follows immediately
  from the fact that the one dimensional lottery pricing $\lotIJ$
  offered to pseudo-agent $(i,j)$ does not depend on $\vali[ij]$, and
  the pseudo-agent may choose any lottery from $\lotIJ$.
\item (allocation rule) Suppose first that for $(i,j)$ and some
  $\valsmij$, $\fudge=0$. Then for any $\vali[ij]$, the utility of
  pseudo-agent $(i,j)$ from lottery $\lotto_j\in\lotIJ$ is the same as
  utility of agent $i$ from lottery $\lotto\in \Lot_i$. Therefore with
  $\fudge=0$, in $\MechL$ agent $i$ purchases lottery
  $\lotto\in\Lot_i$ if and only if, in $\LMech$ the pseudo-agent
  $(i,j)$ purchases lottery $\lotto_j\in\lotIJ$. Moreover, since the
  price shifts $\fudge$ we apply are the same for every lottery
  offered to $(i,j)$, the only manner in which preferences can change
  is if the pseudo-agent obtains negative utility from his preferred
  lottery, in which case he chooses to buy no lottery at all. However,
  our choice of $\fudge$ ensures that the agent obtains non-negative
  utility at $\vali[ij]=0$ and thus also at arbitrary $\vali[ij]$, and
  so the allocation rule of $\LMech$ is identical to that of $\MechL$.
\item (feasibility) Feasibility follows immediately from the fact that
  $\MechL$ satisfies feasibility and the allocation
  rules of the two mechanisms are identical.
\item (nonnegative revenue) Our choice of $\fudge$ ensures that the
  revenue $\LMech$ receives from each agent is always nonnegative;
  this is critical in later arguments, since it allows us to claim
  that the revenue that $\LMech$ obtains from any subset of the
  pseudo-agents is bounded from above by the total expected revenue of
  $\LMech$.
\end{enumerate}

We now relate the revenues of $\MechL$ and $\LMech$.  Let $\assign$ be
any function carrying valuation vectors to sets of pseudo-agents which
respects the unit-demand constraint, i.e. for any valuation vector
$\vals$, for each $i\in{I}$ there exists at most one $j\in{J}$ such
that $(i,j)\in\assign(\vals)$. We call such a function a unit-demand
allocation function. Then we get the following lemma.

\begin{lemma}
\label{lemma:copiesBoundForRevML}
  For any valuation vector $\vals$ and any unit-demand allocation
  function $\assign(\vals)$, we have 
\begin{align*}
  \RevAlg[M^{\lotCol}](\vals)
  &\le \sum_{(i,j)\in\assign(\vals)}\RevAlg[\LMech]_{ij}(\vals)+\sum_{(i,j)\notin\assign(\vals)}\probi[ij](\vals)\vali[ij]\\
  &\le \RevAlg[\LMech](\vals) + \sum_{(i,j)\notin\assign(\vals)}\probi[ij](\vals)\vali[ij],
\end{align*}
  where
  $\lotto_i(\vals)=(\probi[i1](\vals),\dots,\probi[im](\vals),\price_i(\vals))$
  is the lottery purchased by agent $i$ at valuation $\vals$ in the
  mechanism $\MechL$.
\end{lemma}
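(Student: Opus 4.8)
The plan is to prove the chain of inequalities agent-by-agent. I would start from $\RevAlg[M^{\lotCol}](\vals)=\sum_{i\in I}\price_i(\vals)$, where $\price_i(\vals)$ is the price of the lottery $\lotto_i(\vals)$ that agent $i$ buys in $\MechL$, and bound each contribution $\price_i(\vals)$ separately. For each $i$ there are two cases, distinguished by whether $\assign(\vals)$ assigns some pseudo-agent to $i$, and the two cases draw on two different properties of the construction of $\LMech$: the explicit price-shift formula together with $\fudge\ge 0$, and the individual rationality of $\MechL$.

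First I would handle an agent $i$ for which there is some $(i,j)\in\assign(\vals)$, which is unique by the unit-demand property of $\assign$. By the identical-allocation-rule property of $\LMech$, pseudo-agent $(i,j)$ purchases exactly the lottery $\lotto_j$ derived from the lottery $\lotto_i(\vals)$ that agent $i$ buys, whose price by construction is $\price_i(\vals)-\sum_{k\neq j}\probi[ik](\vals)\vali[ik]+\fudge$. Since the shift $\fudge\ge 0$, this gives $\RevAlg[\LMech]_{ij}(\vals)\ge \price_i(\vals)-\sum_{k\neq j}\probi[ik](\vals)\vali[ik]$, that is, $\price_i(\vals)\le \RevAlg[\LMech]_{ij}(\vals)+\sum_{k\neq j}\probi[ik](\vals)\vali[ik]$. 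The degenerate subcase in which $i$ buys nothing, so $\price_i(\vals)=0$ and every $\probi[ik](\vals)=0$, is covered directly by the nonnegative-revenue property $\RevAlg[\LMech]_{ij}(\vals)\ge 0$.

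Next I would treat an agent $i$ none of whose pseudo-agents is assigned. Here I use only that $\MechL$ is individually rational: since $i$ weakly prefers buying $\lotto_i(\vals)$ to buying nothing, her utility $\sum_k \probi[ik](\vals)\vali[ik]-\price_i(\vals)$ is nonnegative, so $\price_i(\vals)\le \sum_k \probi[ik](\vals)\vali[ik]$. Summing the two kinds of per-agent bounds over all $i\in I$ and again invoking that $\assign$ respects unit demand—so that for an assigned agent the index set $\{k:k\neq j\}$ is exactly $\{k:(i,k)\notin\assign(\vals)\}$, while for an unassigned agent every $k$ is unassigned—the residual value terms combine precisely into $\sum_{(i,j)\notin\assign(\vals)}\probi[ij](\vals)\vali[ij]$, which is the first inequality. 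The second inequality is then immediate from the nonnegative-revenue property: since each per-pseudo-agent revenue is nonnegative, the sum over the subset $\assign(\vals)$ of pseudo-agents is at most $\RevAlg[\LMech](\vals)=\sum_{(i,j)}\RevAlg[\LMech]_{ij}(\vals)$.

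The main obstacle is the bookkeeping in the first case: one must be sure that the pseudo-agent picked out by $\assign$ really does buy the lottery derived from $\lotto_i(\vals)$—which is exactly what the identical-allocation-rule property supplies—and that the foregone-value terms $\sum_{k\neq j}\probi[ik](\vals)\vali[ik]$ produced by the price shift line up, across all agents, with the unassigned-pair sum on the right-hand side. Once the unit-demand structure of $\assign$ is used to match these index sets, everything else reduces to the two sign conditions $\fudge\ge 0$ and the nonnegativity of per-pseudo-agent revenue already established for $\LMech$.
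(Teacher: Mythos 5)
Your proof is correct and follows essentially the same route as the paper's: the same per-agent decomposition of $\RevAlg[M^{\lotCol}](\vals)$, the same two per-agent bounds (the price-shift inequality via $\fudge\ge 0$ for assigned agents, and individual rationality for unassigned ones), and the same use of nonnegative per-pseudo-agent revenue for the second inequality. Your treatment is slightly more explicit than the paper's---in particular, invoking the identical-allocation-rule property to identify which lottery pseudo-agent $(i,j)$ buys, and handling the degenerate no-purchase case---but the argument is the same.
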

\begin{proof}
 The revenue $\RevAlg[\MechL](\vals)$ of the lottery-based mechanism
 $\MechL$ at $\vals$ can be written as the sum of the revenues from
 the constituent lottery pricings:
\begin{align*}
\RevAlg[\MechL](\vals) = \sum_{i=1}^{n}\RevAlg[\MechL]_i(\vals).
\end{align*}
If we define $\lotto_i(\vals) =
(\probi[i1](\vals),\dots,\probi[im](\vals),\price_i(\vals)) \in \lotI$
to be the lottery chosen by agent $i$ at $\vals$, then
$\RevAlg[\MechL]_i(\vals)$, which is just the price $\price_i(\vals)$,
can be written as
\begin{equation}
  \label{eqn:lotRevSplit}
  \begin{aligned}
    \RevAlg[\MechL]_i(\vals) 
    &= \left(\price_i(\vals) - \sum_{k\neq j}\probi[ik](\vals)\vali[ik]\right) 
    + \sum_{k\neq j}\probi[ik](\vals)\vali[ik] \\
    &\le \RevAlg[\LMech]_{ij}(\vals) + \sum_{k\neq j}\probi[ik](\vals)\vali[ik],
  \end{aligned}
\end{equation}
for any $j$, where $\RevAlg[\LMech]_{ij}(\vals)$ is the revenue of
mechanism $\LMech$ from the pseudo-agent $(i,j)$.  Furthermore, since
agent $i$ would never elect to purchase a lottery yielding negative
utility, we also have that
\begin{equation}
  \label{eqn:lotRevVsSocVal}
  \RevAlg[\MechL]_i(\vals) \le \sum_{k}\probi[ik](\vals)\vali[ik].
\end{equation}
Note that we designed $\LMech$ such it receives nonnegative revenue
from every pseudo-agent, and $\assign$ contains at most one
pseudo-agent $(i,j)$ for any $i$; so by applying one
of~\eqref{eqn:lotRevSplit} or~\eqref{eqn:lotRevVsSocVal} for each $i$
according to which pseudo-agents $\assign(\vals)$ contains, we get
that
\begin{align*}
  \RevAlg[M^{\lotCol}](\vals)
  &\le \sum_{(i,j)\in\assign(\vals)}\RevAlg[\LMech]_{ij}(\vals)+\sum_{(i,j)\notin\assign(\vals)}\probi[ij](\vals)\vali[ij]\\
  &\le \RevAlg[\LMech](\vals) + \sum_{(i,j)\notin\assign(\vals)}\probi[ij](\vals)\vali[ij],
\end{align*}
the claimed bound.
\end{proof}

\section{Single-agent setting}
In this section we focus on instances of the BMUMD involving a single
agent and $m$ items. In the single agent setting, randomized and
deterministic mechanisms become simply lotteries and pricings,
respectively.  Briest et al. \cite{BCKW-10} demonstrated that when values for
different items are arbitrarily correlated, it is possible to
construct examples where the ratio between the optimal expected
revenues from lotteries and pricings is unbounded. We show that in the
absence of such correlation this ratio is small. Specifically, when
values are distributed independently, the ratio is no more than $4$
(Section~\ref{sec:single-indep}). Moreover, when values have a certain
kind of positive correlation (additive values; Setting 2 described in
Section~\ref{subsec:bayesOPT}), the ratio is at most $8$
(Section~\ref{sec:single-additive}).

\subsection{Independent values (Setting 1)}
\label{sec:single-indep}

We first analyze Setting 1, that is where the value of the agent for
item $i$, $\vali$, is independently distributed according to
c.d.f. $\disti$. Given an instance $\I$ of the single agent BMUMD,
consider the form of the associated instance $\Ifc$.  Note that while
each pseudo-agent desires a different item, the fact that only one
item may be sold means they are effectively competing for the same
thing, the privilege of being served.  Thus, $\Ifc$ can be thought of
as being in a single-item auction setting.  This observation leads to
the following lemma.

\begin{lemma}
  \label{lemma:lottoVsPrice} For any instance $\I$ of the BMUMD in
  Setting 1, the revenue of the optimal deterministic mechanism is at
  least one-fourth the revenue of the optimal randomized mechanism.
\end{lemma}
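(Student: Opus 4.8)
The plan is to combine the two upper-bounding results already developed in the excerpt: Theorem~\ref{thm:bmumdPricing}, which says there is a deterministic mechanism for $\I$ earning at least half the revenue of any truthful mechanism for $\Ifc$, and Lemma~\ref{lemma:copiesBoundForRevML}, which bounds the revenue of a lottery-based mechanism in terms of the copies-mechanism $\LMech$ plus a ``leftover value'' term. Since by Lemma~\ref{lem:randMechLotRedn} an optimal randomized mechanism is (equivalent to) a lottery-based mechanism $\MechL$, it suffices to show that $\RevAlg[\MechL]$ is at most twice the revenue of \emph{some} truthful mechanism for $\Ifc$; chaining this factor of $2$ with the factor of $2$ from Theorem~\ref{thm:bmumdPricing} yields the claimed factor of $4$.

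The key idea is to choose the unit-demand allocation function $\assign$ in Lemma~\ref{lemma:copiesBoundForRevML} cleverly and to interpret the leftover term as the revenue of a second mechanism. In the single-agent setting ($n=1$) the lottery $\lotto_1(\val)$ puts probabilities $\probi[j](\val)$ on the items, and the bound reads
\begin{align*}
\RevAlg[\MechL](\val) \le \RevAlg[\LMech]_{1j}(\val) + \sum_{k\ne j}\probi[k](\val)\vali[k]
\end{align*}
for whichever single pseudo-agent $(1,j)$ we place in $\assign(\val)$. First I would choose $j=\argmax_k \probi[k](\val)\vali[k]$, so that the leftover sum $\sum_{k\ne j}\probi[k](\val)\vali[k]$ is at most $\sum_k \probi[k](\val)\vali[k]$ minus its largest term; more usefully, I would instead bound the full social-value expression $\sum_k \probi[k](\val)\vali[k]$ and recognize it as (an upper bound on) the revenue extractable by a Vickrey-type auction on the copies. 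Because $\Ifc$ in Setting~1 is a single-item environment (only one copy can be served), the quantity $\sum_k \probi[k](\val)\vali[k]$ is a convex combination of the values $\vali[k]$ and hence is bounded by the highest value; the expected highest value is in turn the expected social surplus, which the Vickrey auction $\Vick$ converts into revenue up to Myerson's characterization. The cleanest route is to show $\expect[\sum_k \probi[k](\val)\vali[k]] \le \RevAlg[\Vick]$ where $\Vick$ is an appropriate single-item auction on $\Ifc$, or more precisely that both $\RevAlg[\LMech]$ and this leftover are each bounded by the revenue of a truthful mechanism for $\Ifc$.

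Concretely, I would argue: taking expectations in Lemma~\ref{lemma:copiesBoundForRevML} gives $\RevAlg[\MechL] \le \RevAlg[\LMech] + \expect_{\val}[\sum_{k\ne j}\probi[k](\val)\vali[k]]$. The first term is the revenue of the truthful mechanism $\LMech$ for $\Ifc$. For the second term, note that $\sum_k \probi[k](\val)\vali[k]$ is exactly the expected value the agent derives from her chosen lottery, which is the social welfare of the allocation; since at most one item is ever allocated, this welfare is at most the welfare of the efficient single-item allocation, whose expected value equals the expected revenue of the VCG (Vickrey) mechanism on $\Ifc$ run at its revenue-optimal reserve, or can be directly bounded by a truthful single-item mechanism's revenue via a standard prior-independent/second-price argument. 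Either way each of the two terms is at most the revenue of a truthful mechanism for $\Ifc$, so $\RevAlg[\MechL] \le 2\,\max(\text{truthful revenue for }\Ifc)$, and applying Theorem~\ref{thm:bmumdPricing} finishes the proof with an overall factor of $4$.

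The main obstacle I anticipate is making the leftover term $\sum_{k\ne j}\probi[k](\val)\vali[k]$ rigorously into the revenue of a \emph{bona fide} truthful mechanism for $\Ifc$ rather than merely a welfare quantity: welfare and revenue differ, and one must invoke that the optimal single-item auction revenue is comparable to the welfare only up to the right constant, or else construct an explicit Vickrey-with-reserve mechanism and argue its revenue dominates the leftover. The cleanest resolution is probably to observe that because only one copy can win, the leftover value is bounded by the value of the single item sold, and that selling that item via a second-price auction (the natural truthful mechanism on the single-item instance $\Ifc$) extracts revenue at least this value in expectation through Myerson's optimal-mechanism comparison; pinning down exactly which truthful mechanism's revenue upper-bounds the leftover, and confirming the constant stays at $2$ rather than degrading, is where I expect the real care to be needed.
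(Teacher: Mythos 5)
There is a genuine gap, and it sits exactly where you anticipated trouble: you never manage to turn the leftover term into the revenue of a \emph{bona fide} truthful mechanism, and the route you sketch for doing so cannot work. Your plan is to bound the leftover by the full social value $\sum_k \probi[k](\vals)\vali[k]$, then by the highest value $\max_k \vali[k]$, and finally to claim that the Vickrey auction (possibly with a reserve) ``converts'' this expected welfare into revenue via Myerson's characterization. That last step is false: Myerson's theorem gives the inequality in the wrong direction (optimal revenue is \emph{at most} expected welfare, never the reverse), and the gap between expected welfare and optimal truthful revenue is unbounded in general --- e.g., for values drawn from equal-revenue distributions $\dist(x)=1-1/x$, the expected maximum value diverges while every truthful mechanism's revenue stays bounded. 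So bounding the leftover by welfare gives you nothing, and no constant-factor welfare-to-revenue comparison is available to rescue it.

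The missing idea is the specific choice of the unit-demand allocation function: the paper applies Lemma~\ref{lemma:copiesBoundForRevML} with $\assign(\vals)=\{i^\ast\}$ where $i^\ast=\argmax_i \vali$ is the \emph{highest-valued} pseudo-agent --- not, as you propose, the one maximizing $\probi[k](\vals)\vali[k]$. With this choice the leftover term excludes the top value, so
\begin{equation*}
\sum_{i\neq i^\ast}\probi(\vals)\vali \;\le\; \max_{i\neq i^\ast}\vali
\end{equation*}
(using $\sum_i \probi(\vals)\le 1$), and the right-hand side is the \emph{second-highest} value, which is \emph{pointwise exactly} the revenue of the Vickrey auction $\Vick$ on the single-item instance $\Ifc$ at bids $\vals$. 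No welfare-to-revenue conversion is needed: the bound $\RevAlg[\Lot](\vals)\le \RevAlg[\LMech](\vals)+\RevAlg[\Vick](\vals)$ holds at every valuation vector, and taking expectations gives $\RevAlg[\Lot]\le \RevAlg[\LMech]+\RevAlg[\Vick]$. Since $\LMech$ and $\Vick$ are both truthful mechanisms for $\Ifc$, Theorem~\ref{thm:bmumdPricing} bounds each by twice the optimal item pricing for $\I$, yielding the factor of $4$. Your overall scaffolding (Lemma~\ref{lem:randMechLotRedn} plus Lemma~\ref{lemma:copiesBoundForRevML} plus Theorem~\ref{thm:bmumdPricing}) matches the paper's; what is missing is this one choice of $\assign$, which is precisely what makes the argument go through.
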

\begin{proof}
As previously observed, any randomized mechanism in the single-agent
setting is precisely a lottery pricing $\Lot$.  Let the mechanism
$\LMech$ be as described in Section~\ref{subsubsec:copyMech}.
Applying Lemma~\ref{lemma:copiesBoundForRevML} with
$\assign(\vals)=i^\ast=\argmax_{i}(\vali)$ yields
\begin{align*}                      
  \RevAlg[\Lot](\vals)                            
  &\le \RevAlg[\LMech](\vals) + \sum_{i\neq i^\ast}\probi(\vals)\vali \\
  &\le \RevAlg[\LMech](\vals) + \max_{i\neq i^\ast}\vali,
\end{align*}
since the $\probi(\vals)$'s sum to at most one.  The key observation
is that the second term is precisely the revenue that the Vickrey
auction $\Vick$ would achieve in the instance $\Ifc$ given bids
$\vals$; so we get that in expectation
\begin{equation*}
  \RevAlg[\Lot] \le \RevAlg[\LMech] + \RevAlg[\Vick],
\end{equation*}
and need only apply Theorem~\ref{thm:bmumdPricing} to prove the lemma.
\end{proof}

\subsection{Additive values (Setting 2)}
\label{sec:single-additive}
We demonstrate that a result similar to that of the previous section
holds even in the presence of certain types of correlation.  Consider
again the single agent setting; since the agent is unit demand, it
makes sense to think of the services being offered as perfect
substitutes.  A natural form of correlation, then, would be for the
agent to have some ``base'' value for being served (regardless of
which service is received), plus an additive value specific to the
particular service received.

The setting we consider modifies the single-agent setting by making
agent types consist of $(m+1)$ independently distributed values
$\{t_0,t_1,\dots,t_m\}$; now, the agent's value for item $i$ becomes
$\vali=t_i+t_0$.

Let $\Lot$ be a lottery system over $m$ items in the additive
setting described.  We have the following lemma.
\begin{lemma}
  Given an instance $\I$ of the BMUMD in Setting 2, the revenue
  of any lottery system $\Lot$ for $\I$ satisfies
  $\RevAlg[\Lot]\le 8\RevAlg[\prices]$, for some pricing $\prices$ for
  $\I$.
\end{lemma}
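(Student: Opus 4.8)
The plan is to imitate the proof of Lemma~\ref{lemma:lottoVsPrice} for Setting 1, paying one extra factor of $2$ to cope with the shared base value $t_0$. As there, set $\assign(\vals)=\{j^\ast\}$ with $j^\ast=\argmax_j\vali[j]$; since every value is shifted by the same $t_0$, we also have $j^\ast=\argmax_j t_j$. Lemma~\ref{lemma:copiesBoundForRevML} then gives
\begin{equation*}
\RevAlg[\Lot](\vals)\le \RevAlg[\LMech](\vals)+\sum_{j\neq j^\ast}\probi[j](\vals)\,\vali[j]\le \RevAlg[\LMech](\vals)+\max_{j\neq j^\ast}\vali[j],
\end{equation*}
where the last step uses $\sum_{j\neq j^\ast}\probi[j]\le 1$ and the fact that $j^\ast$ maximizes $\vali[j]$. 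The max term is exactly the revenue of a Vickrey auction $\Vick$ on the copies instance $\Ifc$. Since $\LMech$ and $\Vick$ are both truthful mechanisms for $\Ifc$, in expectation $\RevAlg[\Lot]\le\RevAlg[\LMech]+\RevAlg[\Vick]\le 2\,\OPT(\Ifc)$, where $\OPT(\Ifc)$ is the optimal truthful revenue for $\Ifc$.

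For Setting 1 one now invokes Theorem~\ref{thm:bmumdPricing}. Here, however, the pseudo-agent values $\vali[j]=t_0+t_j$ are \emph{correlated} through the common term $t_0$, so that theorem does not apply directly, and the main remaining task is to prove an additive analogue of it: $\OPT(\Ifc)\le 4\,\RevAlg[\prices]$ for some pricing $\prices$ on $\I$. My plan is to decompose the value $\vali[j]=t_0+t_j$ of the served pseudo-agent into its base part $t_0$ and its item-specific part $t_j$, and argue that in any realization at least half of $\OPT(\Ifc)$'s revenue is accounted for by one of the two parts. This reduces the analysis to two \emph{independent}-value single-parameter instances: a single-item instance with value $t_0$, and a genuine Setting 1 copies instance on the independent values $t_1,\dots,t_m$.

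Each of these transfers to a pricing on the additive instance $\I$ without loss. For the item-specific instance, Theorem~\ref{thm:bmumdPricing} yields a pricing $\{\rho_j\}$ with revenue at least half; moreover it transfers verbatim, since the agent's choice $\argmax_j(\vali[j]-\rho_j)=\argmax_j(t_j-\rho_j)$ is unchanged, while the surplus $t_0$ only makes her buy more often, so revenue does not decrease. For the base instance, pricing is optimal for a single item (Myerson), and a \emph{uniform} price $r_0$ across all $m$ items realizes it on $\I$: the agent buys her favorite item whenever $t_0+\max_j t_j\ge r_0$, paying $r_0$, which happens at least as often as $t_0\ge r_0$. Outputting the better of the two pricings costs the factor $2$ of the base/specific split; combined with the factor $2$ from the $\LMech$-plus-Vickrey bound and the factor $2$ from Theorem~\ref{thm:bmumdPricing}, this gives $\RevAlg[\Lot]\le 8\,\RevAlg[\prices]$.

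The main obstacle is the base value $t_0$ itself. Because it is common to every item, naive accounting bounds its contribution by $\expect[t_0]$, which can be arbitrarily larger than any achievable revenue; the crux is to show the base contribution is genuinely monetizable, i.e.\ to carry out the ``larger half'' split so that the base-charged revenue is bounded by $r_0\Pr[t_0\ge r_0]$ for a \emph{fixed} $r_0$ rather than by $\expect[t_0]$. Making this split rigorous while ensuring that the single pricing we output is oblivious to the realized $t_0$ and remains feasible, and that the probability masses $\sum_{j\neq j^\ast}\probi[j]$ cooperate with the decomposition, is the delicate part of the argument.
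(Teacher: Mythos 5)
There is a genuine gap, and it is fatal to the route you chose rather than merely a delicate step left open. Your plan bounds $\RevAlg[\Lot]\le\RevAlg[\LMech]+\RevAlg[\Vick]\le 2\,\OPT(\Ifc)$, where $\Ifc$ is the copies instance built from the \emph{correlated} values $\vali[j]=t_0+t_j$, and then needs the claim $\OPT(\Ifc)\le 4\,\RevAlg[\prices]$. That claim is false. Take $m=2$, $t_1=t_2=0$ deterministically, and $t_0$ drawn from the equal-revenue distribution on $[1,n]$ (as in the paper's appendix). Then both pseudo-agents in $\Ifc$ have value exactly $t_0$, and the Vickrey auction, which is truthful regardless of correlation, earns the second-highest value, namely $t_0$, pointwise; hence $\OPT(\Ifc)\ge\RevAlg[\Vick]=\expect[t_0]=\Theta(\log n)$. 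Meanwhile the instance $\I$ is effectively single-dimensional, so every pricing, and indeed every lottery system, earns at most $1$. So $\OPT(\Ifc)$ --- and already the right-hand side of your step-one bound, since $\RevAlg[\Vick]\ge t_0$ whenever $m\ge 2$ --- is unboundedly larger than $8\,\RevAlg[\prices]$. This is exactly the ``naive accounting'' trap you flag at the end, but on your route it is not a delicate point to be handled: competition among correlated copies genuinely converts the shared base value into revenue, and no fixed posted price $r_0$ with revenue $r_0\Pr[t_0\ge r_0]$ can recover $\expect[t_0]$. The larger-half split you propose cannot exist.

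The paper's proof never forms the correlated copies instance. It decorrelates \emph{before} passing to copies: it defines an auxiliary instance $\I'$ with $m+1$ items whose values are the independent coordinates $t_0,t_1,\dots,t_m$, under the relaxed feasibility constraint that item $0$ may be sold alongside at most one of items $1,\dots,m$, and converts $\Lot$ into a lottery system for $\I'$ by assigning each lottery the probability $\probi[0]=\sum_{j\ge 1}\probi[j]$ on item $0$; agent utilities, and hence revenue, are preserved. In the resulting copies instance all $m+1$ pseudo-agent values are independent, pseudo-agent $0$ faces no competition (so the optimal revenue from it is a posted price, which transfers to $\I$ as a uniform price on all items), and Theorem~\ref{thm:bmumdPricing} legitimately applies to pseudo-agents $1,\dots,m$. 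Crucially, the Vickrey-style bound from Lemma~\ref{lemma:copiesBoundForRevML} is then applied only among the independent values $t_i$, never to $t_0+t_j$, which is what keeps every term monetizable; a pointwise case analysis on whether $t_0$ or some $t_j$ is the largest coordinate yields a factor $2$ on the pseudo-agent-$0$ part and $3$ on the rest, giving $2+3\cdot 2=8$. If you want to salvage your outline, the fix is precisely this: introduce $t_0$ as a separate pseudo-agent (with probability mass $\sum_j \probi[j]$) before taking copies, rather than trying to split the revenue of the correlated copies instance afterwards.
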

\begin{proof}
  We begin by demonstrating a bound with a weaker multiplicative
  factor of $9$ and then show how to improve it to a factor of $8$.
  Our main technique is to consider an uncorrelated setting $\I'$
  derived from $\I$.  We define $\I'$ to be a single agent setting
  with $(m+1)$ items, and interpret the values $\{t_0,\dots,t_m\}$
  making up an agent's type in $\I$ as being the values of the agent
  in setting $\I'$ for the $(m+1)$ items. In keeping with $\I$, the
  feasibility constraint we associate with $\I'$ is that we may sell
  item $0$, and at most one additional item from among items
  $1,\dots,m$. Note that the agent in $\I'$ is not a unit-demand
  agent.
  
  We now construct a lottery system $\Lot'$ for instance $\I'$ from $\Lot$.
  Let $\lotto=(\probi[1],\dots,\probi[m],\price)$ be a lottery in $\Lot$.
  Define $\probi[0]=\sum_{i=1}^{m}\probi$, and construct
  $\lotto'=(\probi[0],\dots,\probi[m],\price)$.  Note that
  $\lotto'$ does not necessarily satisfy the requirement that the
  $\probi$'s sum to at most one; it does, however, satisfy the
  feasible constraint indicated for $\I'$.  We may thus
  still apply the same technique as in the proof of
  Lemma~\ref{lemma:lottoVsPrice}, albeit with a worsened constant.

  Let $\Lot'$ be the system over $m+1$ services consisting of all of
  the $\lotto'$ defined as above based on $\ell\in\Lot$.  Now, for any
  setting of $t_0,\dots,t_m$, note that the the utility an agent in $\I$
  receives from a particular lottery $\lotto\in\Lot$
  is 
  \begin{equation*} 
    \sum_{i=1}^m \probi\vali - \price =
    \sum_{i=1}^{m}\probi(t_i+t_0)-\price
    = \sum_{i=0}^{m}\probi t_i-\price,
  \end{equation*} 
  precisely the utility a corresponding
  agent in $\I'$ would receive from the corresponding $\ell'\in\Lot'$.
  We thus have $\RevAlg[\Lot]=\RevAlg[\Lot']$.

  Consider applying the proof of Lemma~\ref{lemma:lottoVsPrice} to
  $\Lot'$.  Due to the less restrictive feasibility constraint
  ($\sum_{i=0}^m \probi \le 2$) we get \begin{align*} \RevAlg[\MechLP]
  &\le \RevAlg[\LMechP] + 2\RevAlg[\Vick']\\ &\le
  3\RevAlg[\Opt'], \end{align*} where the mechanisms $\LMechP$ and
  $\Vick'$ are interpreted as being in the copies setting ${\Ifc}'$
  associated with $\I'$, and $\Opt'$ is the optimal mechanism in this
  setting.  In order to prove a bound of the form desired, however, we
  need to relate a mechanism in the setting ${\Ifc}'$ to a
  deterministic one (a pricing) in $\I$.  

  The key observation is that our feasibility constraint in ${\Ifc}'$
  (carried over from $\I'$) means that $\Opt'$ may make decisions
  about allocations and prices for pseudo-agent $0$ separately from
  those for pseudo-agents $1,\dots,m$; as such, $\Opt$ effectively
  consists of two mechanisms, one serving pseudo-agent $0$ and another
  serving pseudo-agents $1,\dots,m$, both under a unit-demand
  constraint.  Now, the optimal mechanism for serving the lone
  single-parameter pseudo-agent is a pricing, and
  Theorem~\ref{thm:bmumdPricing} gives us that a mechanism serving
  pseudo-agents $1,\dots,m$ is within a factor of $2$ of a pricing on
  $m$ items; so recalling that an agent in setting $\I$ has a value of
  $\vali=t_i+t_0$ for item $i$, we can see
  that \begin{equation*} \RevAlg[\Lot]\le 3\RevAlg[\Opt] \le
  9\RevAlg[\prices], \end{equation*} where $\prices$ is the optimal
  pricing for the setting $\I$.

  In order to improve the factor from $9$ to $8$, we need to consider
  the revenue a mechanism $M$ in the setting ${\Ifc}'$ obtains from
  pseudo-agent $0$ and from pseudo-agents $1,\dots,m$; at a particular
  valuation vector ${\ts}$ denote these quantities as
  $\RevAlg[M]_{0}({\ts})$ and $\RevAlg[M]_{-0}({\ts})$,
  respectively.  Now, as previously noted, the optimal mechanism
  $\Opt$ in ${\Ifc}'$ may treat pseudo-agent $0$ independently from
  pseudo-agents $1,\dots,m$; thus, we have that any 
  mechanism $M$ in this setting must satisfy both
  $\RevAlg[M]_{0}({\ts})\le\RevAlg[\Opt]_{0}({\ts})$ and
  $\RevAlg[M]_{-0}({\ts})\le\RevAlg[\Opt]_{-0}({\ts})$.

  Since we know that $\sum_{i=1}^{m}\probi\le1$, 
  when $t_0$ is the maximum among all the $t_i$,
  Lemma~\ref{lemma:copiesBoundForRevML} implies
  \begin{equation*} 
    \RevAlg[\Lot](\ts)\le\RevAlg[\LMech]_{0}(\ts)
    +\RevAlg[\Vick]_{0}(\ts);
  \end{equation*}
  On the other hand, when one of $t_1,\dots,t_m$ takes on the maximum
  value, we end up with, for some $i$,
  \begin{equation*} 
    \RevAlg[\Lot](\ts)\le\RevAlg[\LMech]_{i}(\ts)+2\RevAlg[\Vick]_{i}(\ts),
  \end{equation*}
   Combining these two gives us 
  a pointwise guarantee of
  \begin{align*} 
    \RevAlg[\Lot](\ts)
    &\le \RevAlg[\LMech]_{0}(\ts)
    +\RevAlg[\Vick]_{0}(\ts) 
    +\RevAlg[\LMech]_{-0}(\ts)
    +2\RevAlg[\Vick]_{-0}(\ts)\\
    &\le 2\RevAlg[\Opt]_{0}(\ts)+3\RevAlg[\Opt]_{-0}(\ts).
  \end{align*}
  Therefore,
  \begin{align*}
  \RevAlg[\Lot] \le 2\RevAlg[\Opt]_{0} + 3\RevAlg[\Opt]_{-0} \le 2\RevAlg[\prices] + 6\RevAlg[\prices]
  \end{align*}
  implying the claimed bound of $8$.
\end{proof}

\section{Multi-agent setting}
\label{sec:multi}
In this section we study multi-agent versions of the BMUMD and once
again bound the gap between deterministic and randomized mechanisms
with respect to expected revenue for this setting. The starting point
for our bounds is the observation in Section~\ref{subsec:randMechAsLotteries} 
that randomized mechanisms for this problem can be interpreted as lottery-based
mechanisms. We first discuss Setting 3, namely instances with multiple
agents and multiple items and a ``matching'' feasibility
constraint. The following subsection contains a more general version
with a matroid intersection feasibility constraint (Setting 4).

\subsection{The multi-item auction setting (Setting 3)}
\label{subsec:multiItemAuctions}
We consider instances of the BMUMD where the seller has $m$ different
items, with $k_j$ copies of item $j$, and each of the $n$ unit-demand
buyers have independently distributed values for each item.  The
seller's constraint is to allocate item $j$ to no more than $k_j$
agents, and to allocate at most one item to each agent. 

We note that the set system defined by this feasibility constraint is
a matroid intersection. A set system $(E,\mathcal{F})$ where $E$ is the
ground set of elements ($E=I\times J$ in our setting) is a matroid if
it satisfies the following properties.
\begin{enumerate}
\item {\bf (heredity)} For every $A\in \mathcal{F}$, $B\subset A$ implies $B\in\mathcal{F}$.
\item {\bf (augmentation)} For every $A,B\in\mathcal{F}$ with $|A|>|B|$,
there exists an $e\in A\setminus B$ such that $B\cup\{e\}\in\mathcal{F}$.
\end{enumerate}
The sets in a matroid set system are called independent sets.

A matroid intersection set system $\mathcal{F}$ is an intersection of
two matroids: $\mathcal{F}=\mathcal{F}_1\cap\mathcal{F}_2$ where
$\mathcal{F}_1$ and $\mathcal{F}_2$ are matroids. The unit-demand
constraint and the supply constraints for each item are each instances
of a partition matroid. Thus the system $\sets$ in this setting can be
seen to be an intersection of two partition matroids. We use $\sets_1$
and $\sets_2$ to denote the two constituent matroids, and the term
matching to refer to any allocation or set in $\sets$.

We will need the following facts about matroids.
\begin{proposition}
\label{prop:matroid-bijection}
Let $B_1$ and $B_2$ be any two independent sets of equal size in some
matroid set system $\mathcal{E}$. Then there is a bijective function
$g:B_1\setminus B_2\rightarrow B_2\setminus B_1$ such that for all
$e\in B_1\setminus B_2$, $B_1\setminus\{e\}\cup\{g(e)\}$ is
independent in $\mathcal{E}$.
\end{proposition}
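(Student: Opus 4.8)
The plan is to prove the statement via Hall's marriage theorem applied to a bipartite \emph{exchange graph}. First I would set $X = B_1 \setminus B_2$ and $Y = B_2 \setminus B_1$, which have equal cardinality since $|B_1| = |B_2|$, and build the bipartite graph $G$ on $(X, Y)$ that places an edge between $e \in X$ and $f \in Y$ exactly when $B_1 \setminus \{e\} \cup \{f\}$ is independent in $\mathcal{E}$. A perfect matching of $G$ is precisely a bijection $g$ with the desired exchange property, so it suffices to verify Hall's condition: $|N(S)| \ge |S|$ for every $S \subseteq X$, where $N(S)$ denotes the set of neighbours of $S$ inside $Y$.

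The key structural step is to reinterpret the neighbourhood $N(S)$ using fundamental circuits. For $f \notin B_1$ with $B_1 \cup \{f\}$ dependent, the set $B_1 \cup \{f\}$ contains a unique circuit $C(f)$, and a standard matroid fact gives that $B_1 \setminus \{e\} \cup \{f\}$ is independent if and only if $e \in C(f)$. Hence $f \in N(S)$ iff $C(f)$ meets $S$, and equivalently $f \notin N(S)$ iff $C(f) \setminus \{f\} \subseteq B_1 \setminus S$, i.e.\ iff $f$ lies in the closure $\mathrm{cl}(B_1 \setminus S)$. When instead $B_1 \cup \{f\}$ is already independent there is no circuit and both statements are false, so the equivalence still holds. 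Thus the non-neighbours of $S$ inside $Y$ are exactly the elements of $B_2 \setminus B_1$ lying in $\mathrm{cl}(B_1 \setminus S)$.

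With this dictionary in hand, Hall's condition becomes a rank count. Since $S \subseteq B_1 \setminus B_2$ is disjoint from $B_1 \cap B_2$, every element of $B_1 \cap B_2$ lies in $B_1 \setminus S$ and hence in $\mathrm{cl}(B_1 \setminus S)$; together with the non-neighbours computed above, this gives $B_2 \cap \mathrm{cl}(B_1 \setminus S) \supseteq (B_1 \cap B_2) \cup \big((B_2 \setminus B_1) \setminus N(S)\big)$, so that $|B_2 \cap \mathrm{cl}(B_1\setminus S)| \ge |B_2| - |N(S)|$. On the other hand $B_1 \setminus S$ is independent of size $|B_1| - |S|$, so its closure is a flat of that rank, and since $B_2$ is independent it meets any rank-$r$ flat in at most $r$ elements; therefore $|B_2 \cap \mathrm{cl}(B_1 \setminus S)| \le |B_1| - |S| = |B_2| - |S|$. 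Combining the two inequalities yields $|B_2| - |N(S)| \le |B_2| - |S|$, i.e.\ $|N(S)| \ge |S|$, and Hall's theorem produces the matching.

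I expect the main obstacle to be the second step---correctly translating ``$f$ is not a neighbour of $S$'' into closure membership via uniqueness of the fundamental circuit, and cleanly disposing of the degenerate case where $B_1 \cup \{f\}$ remains independent. Once that correspondence is established, the counting in the third paragraph is routine. Note that I never need $B_1$ or $B_2$ to be bases: the argument uses only that they are independent and of equal size, which is exactly the hypothesis.
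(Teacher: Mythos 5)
Your proof is correct. Note first that the paper itself offers no proof of this proposition: it is stated as a known fact, namely the symmetric exchange property for matroids (Brualdi's bijective exchange theorem), which the paper simply imports. Your Hall's-theorem argument is the standard proof of that classical fact, and every step checks out: the fundamental-circuit dictionary ($B_1\setminus\{e\}\cup\{f\}$ is independent iff $e\in C(f)$, by uniqueness of the circuit in $B_1\cup\{f\}$), the translation of non-adjacency into membership in $\mathrm{cl}(B_1\setminus S)$ (including the degenerate case where $B_1\cup\{f\}$ is independent, so that $f$ is adjacent to all of $X$ and lies outside $\mathrm{cl}(B_1)\supseteq\mathrm{cl}(B_1\setminus S)$), and the rank count $\lvert B_2\cap\mathrm{cl}(B_1\setminus S)\rvert\le r(\mathrm{cl}(B_1\setminus S))=\lvert B_1\rvert-\lvert S\rvert$ played against the lower bound $\lvert B_2\rvert-\lvert N(S)\rvert$. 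One pedantic remark: the dictionary as you state it fails for $S=\emptyset$ in the degenerate case (then $f\notin N(\emptyset)$ yet $f\notin\mathrm{cl}(B_1)$), so you should either restrict attention to nonempty $S$ --- for which Hall's condition is the only nontrivial content anyway --- or state the dictionary only in the direction the counting actually uses. Your closing observation is also a genuine point in your favour: the classical theorem is usually stated for bases, and the textbook reduction for equal-size independent sets passes through truncating the matroid at rank $\lvert B_1\rvert$; your rank-counting argument avoids that reduction and proves exactly the statement the paper needs, for independent sets that need not be bases and need not span $B_1\cup B_2$.
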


\begin{corollary}
\label{cor:matroid-map}
Let $B_1$ and $B_2$ be arbitrary independent sets in some matroid set
system $\mathcal{E}$. Then there exists a set $B_2'\subseteq B_2$ and
a one to one function $g:B_2'\rightarrow B_1$ such that for all $e\in
B_2'$, $B_1\setminus\{g(e)\}\cup\{e\}$ is independent in
$\mathcal{E}$, and for all $e\in B_2\setminus B_2'$, $B_1\cup\{e\}$ is
independent in $\mathcal{E}$.
\end{corollary}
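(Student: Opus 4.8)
The plan is to reduce the statement to the equal-size exchange guarantee of Proposition~\ref{prop:matroid-bijection}. The obstruction is that $B_1$ and $B_2$ may differ in size, while the corollary asks for an asymmetric conclusion: some elements of $B_2$ are swapped into $B_1$ in exchange for an element of $B_1$, whereas others are added outright. My strategy is to pad the smaller of the two sets up to the size of the larger using the augmentation axiom, drawing the padding elements from the other set, apply the proposition to the resulting equal-size pair, and then invert the bijection it produces. I would split into the two cases $|B_1| \ge |B_2|$ and $|B_1| < |B_2|$.

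In the case $|B_1|\ge|B_2|$, I would repeatedly apply augmentation to grow $B_2$ into an independent set $\hat B_2 = B_2 \cup C$ with $|\hat B_2| = |B_1|$; since at every step the current set is smaller than $B_1$, each added element can be drawn from $B_1$, so $C \subseteq B_1 \setminus B_2$ and hence $\hat B_2 \setminus B_1 = B_2 \setminus B_1$. Applying Proposition~\ref{prop:matroid-bijection} to the equal-size pair $B_1, \hat B_2$ yields a bijection $h : B_1 \setminus \hat B_2 \to \hat B_2 \setminus B_1 = B_2 \setminus B_1$ with $B_1 \setminus \{e\} \cup \{h(e)\}$ independent. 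Setting $B_2' = B_2 \setminus B_1$ and $g = h^{-1}$ gives an injection $g : B_2' \to B_1$ with the required exchange property, while every $e \in B_2 \setminus B_2' = B_1 \cap B_2$ already lies in $B_1$, so $B_1 \cup \{e\} = B_1$ is trivially independent.

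In the case $|B_1| < |B_2|$, I would instead grow $B_1$ into $\hat B_1 = B_1 \cup D$ with $|\hat B_1| = |B_2|$ and $D \subseteq B_2 \setminus B_1$; by heredity $B_1 \cup \{e\}$ is independent for each $e \in D$, so these become the free-addition elements. Applying the proposition to $\hat B_1, B_2$ gives a bijection $h : \hat B_1 \setminus B_2 = B_1 \setminus B_2 \to B_2 \setminus \hat B_1 = (B_2 \setminus B_1) \setminus D$ with $\hat B_1 \setminus \{e\} \cup \{h(e)\}$ independent; heredity then lets me delete $D$ from this set to conclude that $B_1 \setminus \{e\} \cup \{h(e)\}$ is independent as well. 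Taking $B_2' = (B_2 \setminus B_1) \setminus D$ and $g = h^{-1}$ works, with $B_2 \setminus B_2' = (B_1 \cap B_2) \cup D$ being exactly the free-addition set.

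The genuinely non-routine step is the bookkeeping that makes the padding elements drop out cleanly: I must ensure the set I augment with lies entirely inside the other set, so that the symmetric-difference bijection from the proposition restricts to a bijection between $B_1 \setminus B_2$ and the unmatched part of $B_2 \setminus B_1$, and I must use heredity to transfer the exchange property from the padded set $\hat B_1$ back to $B_1$. Once these containments are verified, injectivity of $g$ is immediate from the bijectivity of $h$, and no further matroid machinery is needed.
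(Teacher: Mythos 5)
Your proof is correct and follows essentially the same route as the paper: pad the smaller set via augmentation with elements drawn from the other set, apply Proposition~\ref{prop:matroid-bijection} to the resulting equal-size pair, restrict/invert the bijection, and use heredity to transfer the exchange and free-addition properties back to the original sets. The only difference is presentational---you split into the cases $|B_1|\ge|B_2|$ and $|B_1|<|B_2|$, while the paper handles both at once by expanding whichever set is smaller and working with $B_2'=B_2\setminus\expand{B_1}$.
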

\begin{proof}
  In order to apply Proposition~\ref{prop:matroid-bijection} we need
  independent sets of equal size.  So we begin by repeatedly applying
  the augmentation property to whichever of $B_1$ and $B_2$ is smaller
  in order to end up with two sets $\expand{B_1}\supset B_1$ and
  $\expand{B_2}\supset B_2$ such that
  $\abs{\expand{B_1}}=\abs{\expand{B_2}}$.  Now,
  Proposition~\ref{prop:matroid-bijection} guarantees us a bijection
  $g:\expand{B_2}\setminus\expand{B_1}\rightarrow\expand{B_1}\setminus\expand{B_2}$
  such that $\forall e\in\expand{B_2}\setminus\expand{B_1}$,
  $\expand{B_1}\setminus\{g(e)\}\cup\{e\}$ is independent.

  Set
  ${B_2}'=B_{2}\setminus\expand{B_1}\subset\expand{B_{2}}\setminus\expand{B_1}$;
  note that since $\expand{B_1}{\subset}B_1\cup B_2$, we
  have $\expand{B_1}\setminus\expand{B_2}\subset B_1$.  Thus, we may
  view $g$ as a one to one function $g:{B_2}'\rightarrow B_1$.  It
  retains the first specified property, since for any $e\in{B_2}'$,
  $B_1\setminus\{g(e)\}\cup\{e\}\subset\expand{B_1}\setminus\{g(e)\}\cup\{e\}$
  is independent.  Furthermore,
  $e{\in}B_2\setminus{B_2}'\subset\expand{B_1}$ implies
  $B_1\cup\{e\}\subset\expand{B_1}$ is independent, and so the second
  specified property holds as well.
\end{proof}



Our proof consists of three steps:
\begin{enumerate}
\item From Lemma~\ref{lem:randMechLotRedn}, we note that any
randomized mechanism for this problem can be seen as a lottery-based
mechanism.


\item We bound the revenue of any lottery-based mechanism  
for an instance $\I$ of the BMUMD by those of a collection of three
truthful deterministic mechanisms for the corresponding
single-parameter instance with copies, $\Ifc$.
\item We use the result in \cite{CHMS-10} (Theorem~\ref{thm:bmumdOPM}) to construct a truthful deterministic 
mechanism for $\I$ whose revenue is within a factor of $4/27$ of the
optimal revenue for $\Ifc$.
\end{enumerate}

\begin{lemma}\label{lem:lotSPRednMatchings}
Consider an instance $\I$ of the BMUMD in Setting 3. The revenue from
any lottery-based mechanism $M^{\lotCol}$ for $\I$ is at most five
times the expected revenue of Myerson's mechanism for the instance
$\Ifc$.
\end{lemma}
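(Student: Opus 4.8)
The plan is to mirror the single-agent argument of Lemma~\ref{lemma:lottoVsPrice}, but account for the matroid-intersection feasibility constraint that couples the allocations to different agents. The starting point is Lemma~\ref{lemma:copiesBoundForRevML}, which for \emph{any} unit-demand allocation function $\assign$ gives the pointwise bound
\begin{equation*}
  \RevAlg[M^{\lotCol}](\vals)
  \le \RevAlg[\LMech](\vals)
  + \sum_{(i,j)\notin\assign(\vals)}\probi[ij](\vals)\vali[ij].
\end{equation*}
Here $\RevAlg[\LMech](\vals)$ is the revenue of the truthful single-parameter mechanism $\LMech$ on $\Ifc$, which by nonnegativity of per-pseudo-agent revenue is at most the revenue of Myerson's optimal mechanism $\Mye$ for $\Ifc$. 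So the entire difficulty lies in choosing $\assign$ cleverly and then bounding the ``leftover'' social-welfare term $\sum_{(i,j)\notin\assign(\vals)}\probi[ij](\vals)\vali[ij]$ by a constant number of copies of $\RevAlg[\Mye]$.

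First I would fix a valuation vector $\vals$ and let $\probs(\vals)$ denote the fractional allocation that the lottery-based mechanism produces; by feasibility property~4 this fractional allocation lies in the matroid-intersection polytope (its restriction to any set $S$ is at most $r(S)$). The natural choice of $\assign(\vals)$ is the \emph{max-weight matching} in $\sets$ under weights $\vali[ij]$ — equivalently the VCG/Vickrey allocation on $\Ifc$, whose revenue is what we want to charge the leftover term against. The key structural step is to decompose the leftover mass. Because the fractional lottery allocation $\probs$ is feasible, I expect to round/support it so that each unit of probability mass $\probi[ij]$ sits on an edge $(i,j)$, and then to redistribute that mass onto edges of the chosen integral matching $\assign$ using the matroid exchange machinery — precisely Corollary~\ref{cor:matroid-map} applied to the two constituent partition matroids $\sets_1,\sets_2$. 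The corollary lets me build, for each of the two matroids, a near-injection from the support of $\probs$ into $\assign(\vals)$ so that every leftover edge's value is dominated by the value of some matched edge to which it is mapped, at the cost of a bounded overcounting factor (two matroids, so each matched edge absorbs at most two leftover edges per matroid). This should let me bound the leftover welfare by a constant times the value of the max-weight matching, i.e. a constant times $\RevAlg[\Vick]$ on $\Ifc$.

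Next I would convert the resulting Vickrey/social-welfare terms into Myerson revenue. Since $\Ifc$ is a single-parameter matroid-intersection instance, the Vickrey (VCG) payment collected from the winning pseudo-agents is a valid lower bound mechanism, and the revenue of VCG on a single-parameter downward-closed instance is at most the revenue of Myerson's optimal mechanism $\Mye$ for that same instance; I would invoke this (the analogue used in the single-agent proof, where $\RevAlg[\Vick]$ appeared directly) to replace each $\RevAlg[\Vick]$ term by $\RevAlg[\Mye]$. Combining the $\LMech\le\Mye$ bound with the constant number of Vickrey/Myerson terms from the leftover-welfare bound should assemble into the pointwise, and hence in-expectation, inequality $\RevAlg[M^{\lotCol}]\le 5\,\RevAlg[\Mye]$.

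The main obstacle I anticipate is the \emph{fractional-to-matching transfer}: the lottery allocation $\probs(\vals)$ is a point in the matroid-intersection polytope, not an integral matching, so Corollary~\ref{cor:matroid-map} (which is stated for independent \emph{sets}) does not apply verbatim. I would need either to express $\probs$ as a convex combination of integral matchings (possible since the matroid-intersection polytope is integral) and apply the exchange argument to each, or to work directly with a fractional exchange inequality. Tracking the constants through this step — ensuring the two partition matroids together contribute exactly the slack needed to hit $5$ rather than something larger, after adding the single $\RevAlg[\LMech]\le\RevAlg[\Mye]$ term — is where the bookkeeping is delicate, and getting the split between the $\LMech$ contribution and the two matroids' Vickrey contributions to sum to $5$ is the crux of the argument.
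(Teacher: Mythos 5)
Your skeleton matches the paper's proof up to a point: both apply Lemma~\ref{lemma:copiesBoundForRevML} with $\assign(\vals)$ equal to the maximum-weight matching $A_1(\vals)$, both use the convex decomposition of the fractional allocation $\probs(\vals)$ into integral matchings, and both finish by invoking Myerson's optimality over truthful mechanisms for $\Ifc$. The gap is in how you charge the leftover term $\sum_{(i,j)\notin A_1(\vals)}\probi[ij](\vals)\vali[ij]$. You propose to bound it by a constant times the value of the max-weight matching and then identify that value with $\RevAlg[\Vick]$ on $\Ifc$. This conflates welfare with revenue, and in the multi-agent setting the identification fails in both directions: $\val(A_1)$ is social welfare, which no constant multiple of Myerson revenue dominates (consider equal-revenue distributions), while the genuine VCG revenue on $\Ifc$ does \emph{not} dominate the leftover term. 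Concretely, take two agents and two items $a,b$ of unit supply with $\vali[1a]=\vali[1b]=\vali[2a]=10$ and $\vali[2b]=0$. Then $A_1=\{(1,b),(2,a)\}$, and every VCG payment is zero: removing either winner costs the remaining pseudo-agents nothing, since a substitute of welfare $10$ exists. Yet a lottery-based mechanism placing probability $1$ on the edge $(1,a)$ has leftover term $10$. The single-agent analogy you lean on---leftover $\le$ second-highest value $=$ Vickrey revenue---is special to the single-item auction that $\Ifc$ collapses to in Setting 1; in the matching setting, the correct generalization of ``second-highest value'' is the value of the best matching \emph{disjoint from} $A_1$, and that quantity is not VCG's revenue.

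The paper's proof supplies exactly this missing piece. It defines $A_2(\vals)$, the maximum-weight matching among pseudo-agents outside $A_1(\vals)$, and proves two claims: (i) the leftover term is at most $\val(A_2)$, because zeroing the entries of $A_1$ in the probability matrix leaves a convex combination of matchings disjoint from $A_1$, each of value at most $\val(A_2)$; and (ii) $\val(A_2)$ can be extracted as actual revenue, up to a factor of $2$, by two new truthful threshold mechanisms $M_2,M_3$ for $\Ifc$ built from the exchange maps $g_1,g_2:A_2\to A_1$ of Corollary~\ref{cor:matroid-map} (one per constituent partition matroid). Mechanism $M_2$ serves $(i,j)\in A_1$ iff $\vali[ij]\ge \vali[g_1^{-1}(i,j)]/2$, so its revenue from each served pseudo-agent is the threshold $\vali[g_1^{-1}(i,j)]/2$, and similarly for $M_3$ with $g_2$; maximality of $A_1$ forces, for each $e\in A_2$, at least one of $\vali[g_1(e)],\vali[g_2(e)]$ to be at least $\vali[e]/2$, whence $2\left(\RevAlg[M_2]+\RevAlg[M_3]\right)\ge\val(A_2)$. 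Chaining these with $M_1=\LMech$ and Myerson's optimality gives the factor $1+2+2=5$. Without this conversion of the second-best matching's value into threshold revenue---rather than an appeal to VCG---your argument cannot close.
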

\begin{proof}
We define three truthful deterministic mechanisms $M_1,\ M_2,\ M_3$ for $\Ifc$, 
all facing the same feasibility constraint $\sets$ as the set of lottery pricings
$\lotCol$, such that 
\begin{align}\label{eqn:threeMech}
\RevAlg[M^{\lotCol}](\vals) &\leq \RevAlg[M_1](\vals) + 2\left(\RevAlg[M_2](\vals) + \RevAlg[M_3](\vals)\right)\\
\notag
&\leq 5\RevMye(\vals),
\end{align}
 The second inequality follows from the optimality of Myerson's 
mechanism for single parameter settings (Myerson's mechanism also faces 
the feasibility constraint $\sets$).


Consider the $\Ifc$ setting and fix an instantiation of values
$\vals$. Let $A_1(\vals)$ denote the set of pseudo-agents that belong
to the maximum-valued matching (we drop the argument wherever it is
obvious). Among the remaining $(I\times J)\setminus A_1$
pseudo-agents, again let $A_2$ denote the set of pseudo-agents that
belong to the maximum-valued matching i.e.
\begin{align*}
A_2(\vals) = \underset{S \text{ is a matching}}{\underset{S\subseteq [mn], S\cap A_1(\vals)=\emptyset}{\text{argmax}}}\val(S).
\end{align*}
We may assume without loss of generality that $A_1$ and $A_2$ are
defined uniquely.

Note that $A_1(\vals)$ is a unit-demand allocation
function. Therefore, Lemma~\ref{lemma:copiesBoundForRevML} implies
that
\begin{align}\label{eqn:threeMech-2}
\RevAlg[\MechL](\vals) \le \underbrace{\RevAlg[\LMech](\vals)}_{\text{Term}_1} + \underbrace{\sum_{(i,j)\notin A_1(\vals)}\probi[ij](\vals)\valij}_{\text{Term}_2}.
\end{align}


We now define the three mechanisms $M_1$, $M_2$ and $M_3$ for
$\Ifc$. Mechanism $M_1$ is $\LMech$ and so $\RevAlg[M_1]$ is exactly
Term$_1$. Mechanisms $M_2$ and $M_3$ are defined in such a way that
$2(\RevAlg[M_2] + \RevAlg[M_3])$ is at least Term$_2$. This would
prove~\eqref{eqn:threeMech}. 




Now, Corollary~\ref{cor:matroid-map} implies the existence of two one
to one partial functions with the following properties.
\begin{align*}
&g_{1}: A_2 \rightarrow A_{1} & & \text{ s.t. } \forall e \in A_2:\\ 
& & & g_1(e) \text{
is undefined and } A_1\cup\{e\} \in \sets_1, \text{ or }\\
& & & g_1(e) \text{ is defined and } A_1 \setminus \{g_{1}(e)\} \cup \{e\} \in \sets_1\\ 
&g_{2}: A_2 \rightarrow A_{1} & & \text{ s.t. } \forall e \in A_2:\\ 
& & & g_2(e) \text{ is undefined and } A_1\cup\{e\} \in \sets_2, \text{ or }\\ 
& & & g_2(e) \text{ is defined and }
A_1 \setminus \{g_{2}(e)\} \cup \{e\} \in \sets_2
\end{align*}
Note that the maximality of $A_1$ implies that every element of $A_2$
has an image under either $g_1$ or $g_2$ or both. We define the
mechanisms $M_2$ and $M_3$ by specifying their allocation rules. Given
a valuation vector $\vals$, the mechanism $M_2$ serves only those
pseudo-agents $(i,j)$ that belong to $A_1$ and for which
$\vali[ij] \geq \vali[g_1^{-1}(i,j)]/2$ (if $g_1^{-1}$ is defined at
that point). Likewise, mechanism $M_3$ serves only those pseudo-agents
$(i,j) \in A_1$ that have $\vali[ij] \geq \vali[g_2^{-1}(i,j)]/2$ (if
defined). We note that $M_2$ and $M_3$ have monotone allocation rules,
and are therefore truthful. Truthful payments can be defined
appropriately. They also satisfy the feasibility constraint $\sets$.

 
We now prove the revenue guarantee for $M_2$ and $M_3$ through the
following two claims.
\begin{claim}\label{claim:mechBoundsA2}
Twice the combined revenue of mechanisms $M_2$ and $M_3$ is no less
than the sum of values of all pseudo-agents in $A_2$, i.e.,
\begin{align*}
2\left(\RevAlg[M_2](\vals) + \RevAlg[M_3](\vals)\right) \geq \sum_{(i,j)\in A_2}\vali[ij].
\end{align*}
\end{claim}
\begin{proof}
Consider any pseudo-agent $(i,j) \in A_2$, and the pseudo-agents
$g_1(i,j)$ and $g_2(i,j) \in A_1$ if defined. Note that $A_1' =
A_1\cup {(i,j)} \setminus \{g_1(i,j), g_2(i,j)\}$ is feasible. Suppose
both $\vali[g_1(i,j)]$ and $\vali[g_2(i,j)]$ are less than
$\vali[ij]/2$; then the matching $A_1'$ is a valid matching and
$v(A_1') > v(A_1)$ which is a contradiction to the optimality of
$A_1$. Thus one of $\vali[g_1(i,j)]$ or $\vali[g_2(i,j)]$ must be at
least $\vali[ij]/2$ and we get this amount in $M_2$ or $M_3$
respectively.
\end{proof}
\begin{claim}\label{claim:A2BoundsSocVal}
The sum of values of all pseudo-agents in $A_2$ is no less than Term$_2$:
\begin{align*}
\sum_{(i,j)\in A_2(\vals)}\vali[ij] \geq \sum_{(i,j) \notin A_1(\vals)}\probi[ij](\vals)\vali[ij].
\end{align*}
\end{claim}
\begin{proof}
Consider the $n\times m$ matrix of all probabilities
$\probi[ij](\vals)$. This matrix arose from a feasible randomized
mechanism; it therefore represents a probability distribution over
matchings and can be represented as a convex combination of
matchings. In this probability matrix, replace with zeros all the
entries $(i,j) \in A_1$. The newly obtained matrix can be represented
as a convex combination of matchings all of which have a zero entry
for every $(i,j) \in A_1$. Then the claim follows by the definition of
$A_2$.
\end{proof}

Claims~\ref{claim:mechBoundsA2} and \ref{claim:A2BoundsSocVal}
together with Equations~\eqref{eqn:threeMech}
and \eqref{eqn:threeMech-2} complete the proof.
\end{proof}

\begin{theorem}
\label{thm:mp-matching}
The revenue of any randomized mechanism for an instance of the BMUMD
in Setting 3 is at most $33.75$ times the revenue of the optimal
truthful deterministic mechanism for the instance.
\end{theorem}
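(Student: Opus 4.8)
The plan is to chain together the three results already in place: the structural reduction of randomized mechanisms to lottery-based mechanisms (Lemma~\ref{lem:randMechLotRedn}), the factor-five comparison of lottery-based revenue against Myerson's mechanism for the copies instance (Lemma~\ref{lem:lotSPRednMatchings}), and the deterministic approximation guarantee imported from \cite{CHMS-10} (Theorem~\ref{thm:bmumdOPM}). No new technical machinery is needed; the theorem is a bookkeeping composition of these three facts, and the stated constant $33.75$ should emerge as the product of the individual constants.

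First I would fix an arbitrary incentive-compatible randomized mechanism for the Setting~3 instance $\I$. By Lemma~\ref{lem:randMechLotRedn} this mechanism is equivalent to a lottery-based mechanism $\MechL$, so it suffices to upper bound $\RevAlg[\MechL]$. Lemma~\ref{lem:lotSPRednMatchings} then gives $\RevAlg[\MechL]\le 5\,\RevMye$, where $\RevMye$ is the expected revenue of Myerson's mechanism for the single-parameter copies instance $\Ifc$.

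Next I would invoke Theorem~\ref{thm:bmumdOPM}. Since Myerson's mechanism is in particular a \emph{truthful} mechanism for $\Ifc$, the theorem furnishes a truthful deterministic mechanism for $\I$ whose revenue, call it $D$, satisfies $D\ge \tfrac{4}{27}\RevMye$, equivalently $\RevMye\le \tfrac{27}{4}D$. Combining the two bounds yields $\RevAlg[\MechL]\le 5\cdot\tfrac{27}{4}\,D=\tfrac{135}{4}\,D=33.75\,D$. Because $D$ is the revenue of a feasible deterministic mechanism for $\I$, it is at most the revenue of the optimal truthful deterministic mechanism, so the same bound holds against the optimum, which is the claimed statement.

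There is no genuine obstacle here—the hard work was all done in proving Lemma~\ref{lem:lotSPRednMatchings}—so the only points requiring care are the directions of the inequalities and the applicability conditions: confirming that Myerson's mechanism is admissible as one of the ``truthful mechanisms for $\Ifc$'' to which Theorem~\ref{thm:bmumdOPM} applies, and observing that replacing the specific mechanism achieving $D$ by the optimal deterministic mechanism only strengthens the conclusion. The arithmetic $5\times\tfrac{27}{4}=\tfrac{135}{4}=33.75$ then gives exactly the advertised constant.
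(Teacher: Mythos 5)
Your proposal is correct and follows exactly the paper's own proof, which consists of citing the same three results (Lemma~\ref{lem:randMechLotRedn}, Lemma~\ref{lem:lotSPRednMatchings}, and Theorem~\ref{thm:bmumdOPM}) and composing the constants $5\times\tfrac{27}{4}=33.75$. Your write-up merely makes explicit the bookkeeping that the paper leaves implicit, including the correct observation that Myerson's mechanism qualifies as a truthful mechanism for $\Ifc$ in Theorem~\ref{thm:bmumdOPM}.
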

\begin{proof}
The proof follows from Lemmas~\ref{lem:randMechLotRedn}
and~\ref{lem:lotSPRednMatchings}, and Theorem~\ref{thm:bmumdOPM}.
\end{proof}

\subsection{The general matroid setting (Setting 4)}
We now show that Theorem~\ref{thm:mp-matching} extends to the general
matroid intersection version of the BMUMD as well. While
Lemma~\ref{lem:lotSPRednMatchings} extends to this more general
setting almost exactly, the counterpart of Theorem~\ref{thm:bmumdOPM}
for this setting is somewhat weaker. So we can only bound the gap
between the revenue of an optimal randomized incentive-compatible
mechanism and that of an optimal deterministic implementation in
undominated strategies (see Theorem~\ref{thm:bmumdSPM}) for this
setting.

As defined earlier, in Setting 4, the seller faces a feasibility
constraint specified by the set system $\sets\subseteq 2^{I\times J}$,
where $I$ is the set of agents and $J$ is the set of services, $\sets$
is the intersection of a general matroid constraint (given by
$\sets_1$) and the unit demand constraint (that we denote using
$\sets_2$); $\sets = \sets_1 \cap \sets_2$. Note that $\sets_2$ is
also a matroid.


We use the same three step approach as for the matching version to bound
the revenue of the randomized mechanism

\begin{lemma}\label{lem:lotSPRednGeneral}
Consider an instance $\I$ of the BMUMD in Setting 4.  The revenue from
any lottery-based mechanism $M^{\lotCol}$ for instance $\I$ is at most
five times the expected revenue of Myerson's mechanism for the single
parameter instance with copies $\Ifc$.
\end{lemma}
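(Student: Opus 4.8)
The plan is to mirror the three-step structure of the proof of Lemma~\ref{lem:lotSPRednMatchings}, replacing the two-partition-matroid-intersection argument by the general matroid intersection $\sets=\sets_1\cap\sets_2$. As before, fix a valuation vector $\vals$ and start from the lottery-based mechanism $M^{\lotCol}$. Let $A_1(\vals)$ be the maximum-valued feasible set (independent in both $\sets_1$ and $\sets_2$), and let $A_2(\vals)$ be the maximum-valued feasible set disjoint from $A_1$. Since $A_1$ is a matching it is a unit-demand allocation function, so Lemma~\ref{lemma:copiesBoundForRevML} again gives the pointwise split
\begin{align*}
\RevAlg[\MechL](\vals)\le \RevAlg[\LMech](\vals)+\sum_{(i,j)\notin A_1(\vals)}\probi[ij](\vals)\valij,
\end{align*}
so I again set $M_1=\LMech$, making $\RevAlg[M_1]$ equal to the first term, and must construct $M_2,M_3$ whose revenues dominate half of the second (``$\mathrm{Term}_2$'') term.

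The only place the matching-specific structure entered was in obtaining the two one-to-one partial maps $g_1,g_2:A_2\to A_1$ via Corollary~\ref{cor:matroid-map}. The key observation is that Corollary~\ref{cor:matroid-map} is stated for an arbitrary matroid set system, so I apply it once with $\mathcal{E}=\sets_1$ and once with $\mathcal{E}=\sets_2$, taking $B_1=A_1$ and $B_2=A_2$ in each case. This yields partial functions $g_1$ (for $\sets_1$) and $g_2$ (for $\sets_2$) with exactly the exchange properties used before: for each $e\in A_2$ and each $\ell\in\{1,2\}$, either $g_\ell(e)$ is undefined and $A_1\cup\{e\}\in\sets_\ell$, or $A_1\setminus\{g_\ell(e)\}\cup\{e\}\in\sets_\ell$. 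Define $M_2,M_3$ precisely as in the matching case: $M_2$ serves each $(i,j)\in A_1$ with $\vali[ij]\ge \vali[g_1^{-1}(i,j)]/2$, and $M_3$ serves each $(i,j)\in A_1$ with $\vali[ij]\ge \vali[g_2^{-1}(i,j)]/2$ (whenever the relevant inverse is defined). These allocation rules are monotone, hence truthful, and serve only subsets of $A_1\in\sets$, so they are feasible. The analogues of Claim~\ref{claim:mechBoundsA2} and Claim~\ref{claim:A2BoundsSocVal} then go through verbatim, establishing $2(\RevAlg[M_2](\vals)+\RevAlg[M_3](\vals))\ge \sum_{(i,j)\in A_2}\vali[ij]\ge \mathrm{Term}_2$, and combining with the optimality of Myerson's mechanism gives $\RevAlg[M^{\lotCol}](\vals)\le 5\RevMye(\vals)$ in expectation.

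The main obstacle I anticipate is verifying that the feasibility argument in the counterpart of Claim~\ref{claim:mechBoundsA2} survives the generalization. In the matching case the contradiction argument used that $A_1'=A_1\cup\{(i,j)\}\setminus\{g_1(i,j),g_2(i,j)\}$ is feasible; here I must confirm this swap keeps $A_1'$ independent in \emph{both} $\sets_1$ and $\sets_2$ simultaneously. The $\sets_\ell$-exchange property from Corollary~\ref{cor:matroid-map} guarantees independence in each matroid separately for the single relevant swap, and removing the (possibly two distinct) images $g_1(i,j),g_2(i,j)$ only loses elements, so heredity keeps $A_1'\cap\,$(each matroid) independent; the care needed is in the degenerate cases where one of $g_1(i,j),g_2(i,j)$ is undefined (so $A_1\cup\{e\}$ is already independent in that matroid) and where $g_1(i,j)=g_2(i,j)$. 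I would handle these by noting that in every such case one of the two exchange guarantees still applies to show $A_1'\in\sets$, so the optimality of $A_1$ is contradicted exactly as before, forcing one of $\vali[g_1(i,j)],\vali[g_2(i,j)]\ge\vali[ij]/2$.
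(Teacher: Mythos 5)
Your proposal matches the paper's proof essentially step for step: the paper likewise reuses the construction from Lemma~\ref{lem:lotSPRednMatchings}, taking $A_1,A_2$ to be the two successive maximum-valued feasible sets, applying Corollary~\ref{cor:matroid-map} separately to $\sets_1$ and $\sets_2$ to get the partial maps $g_1,g_2$, setting $M_1=\LMech$ and defining $M_2,M_3$ by the same half-value thresholds, and concluding via the analogues of Claims~\ref{claim:mechBoundsA2} and~\ref{claim:A2BoundsSocVal}. Your added care about why $A_1\cup\{(i,j)\}\setminus\{g_1(i,j),g_2(i,j)\}$ remains independent in \emph{both} matroids (heredity applied to each single-swap guarantee, including the degenerate cases) is precisely the point the paper leaves implicit, and your handling of it is correct.
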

\begin{proof}
We will prove this Lemma along the lines of our proof for
Lemma~\ref{lem:lotSPRednMatchings}.  We define three truthful
deterministic mechanisms $M_1,\ M_{2},\ M_{3}$ for $\Ifc$ so that
\begin{align}\label{eqn:fiveMechGeneral}
\RevAlg[M^{\lotCol}](\vals) &\leq \RevAlg[M_1](\vals)
 + 2\left(\RevAlg[M_{2}](\vals) + \RevAlg[M_{3}](\vals) \right)\\
& \leq 5\RevMye(\vals)\nonumber.
\end{align}

As before, given an instantiation of values $\vals$, let $A_1(\vals)$
denote the set of pseudo-agents that belong to the maximum valued
feasible set.  Among the remaining pseudo-agents, let $A_2(\vals)$
denote the set of pseudo-agents that belong to the maximum valued
feasible set i.e.
\begin{align*}
A_2(\vals) = \underset{S \in \sets - A_1(\vals)}{\text{argmax}} \val(S)
\end{align*}
Lemma~\ref{lemma:copiesBoundForRevML} implies
\begin{align*}
\RevAlg[\MechL](\vals) \le \underbrace{\RevAlg[\LMech](\vals)}_{\text{Term}_1} + \underbrace{\sum_{(i,j)\notin A_1(\vals)}\probi[ij](\vals)\valij}_{\text{Term}_2}.
\end{align*}
Therefore, once again we define $M_1$ to be $\LMech$ and define $M_2$
and $M_3$ in such a way that twice their revenue combined is no less
than Term$_2$.

As before we can define partial one to one functions from $A_2$ to
$A_1$ satisfying
\begin{align*}
&g_{1}: A_2 \rightarrow A_{1} & & \text{ s.t. } \forall e \in A_2:\\ 
& & & g_1(e) \text{
is undefined and } A_1\cup\{e\} \in \sets_1, \text{ or }\\
& & & g_1(e) \text{ is defined and } A_1 \setminus \{g_{1}(e)\} \cup \{e\} \in \sets_1\\ 
&g_{2}: A_2 \rightarrow A_{1} & & \text{ s.t. } \forall e \in A_2:\\ 
& & & g_2(e) \text{ is undefined and } A_1\cup\{e\} \in \sets_2, \text{ or }\\ 
& & & g_2(e) \text{ is defined and }
A_1 \setminus \{g_{2}(e)\} \cup \{e\} \in \sets_2
\end{align*}

The mechanisms $M_2$ and $M_3$ are also defined as before: $M_2$
serves only those pseudo-agents $(i,j)$ in $A_1$ for which
$\vali[ij] \geq \vali[g_1^{-1}(i,j)]/2$ (if defined), and $M_3$ serves
only those pseudo-agents $(i,j) \in A_1$ that have
$\vali[ij] \geq \vali[g_2^{-1}(i,j)]/2$ (if defined). We note that
every element in $A_2$ gets mapped to at least one and at most two
elements under the partial functions defined above. Therefore, we can
extract a revenue of at least $1/2 \sum_{(i,j)\in A_2}\vali[ij]$ from
$M_2$ and $M_3$ together. Claim~\ref{claim:A2BoundsSocVal} now implies
the result.
\end{proof}

\begin{theorem}\label{thm:matroidThm}
The revenue of any incentive compatible randomized mechanism for an
instance $\I$ of the BMUMD in Setting 4 is at most $40$ times the
revenue of the optimal deterministic mechanism for $\I$ implemented in
undominated strategies.
\end{theorem}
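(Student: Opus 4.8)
The plan is to chain together the three ingredients already established, exactly paralleling the proof of Theorem~\ref{thm:mp-matching} but substituting the matroid-setting bounds. First I would invoke Lemma~\ref{lem:randMechLotRedn} to replace an arbitrary incentive-compatible randomized mechanism $\Mech$ for $\I$ by an equivalent lottery-based mechanism $M^{\lotCol}$ with the same expected revenue, so that it suffices to bound $\RevAlg[M^{\lotCol}]$. Next I would apply Lemma~\ref{lem:lotSPRednGeneral}, which was proved under precisely the Setting~4 feasibility constraint $\sets=\sets_1\cap\sets_2$ and directly yields $\RevAlg[M^{\lotCol}]\le 5\RevMye$, where $\RevMye$ is the expected revenue of Myerson's mechanism for the single-parameter instance with copies $\Ifc$ associated with $\I$.

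The final step connects $\RevMye$ back to a deterministic mechanism for the original instance $\I$. Here the key observation is that Myerson's mechanism is itself a truthful mechanism for $\Ifc$, so Theorem~\ref{thm:bmumdSPM} applies to it: there exists a deterministic mechanism $\Alg$ for $\I$, implemented in undominated strategies, whose revenue is at least $1/8$ of $\RevMye$. Combining the three bounds gives $\RevAlg[\Mech]=\RevAlg[M^{\lotCol}]\le 5\RevMye\le 5\cdot 8\cdot\RevAlg[\Alg]=40\,\RevAlg[\Alg]$. Since $\Alg$ is one particular deterministic mechanism implemented in undominated strategies, its revenue is at most that of the optimal such mechanism, which yields the claimed factor-$40$ bound. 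The arithmetic $5\times 8=40$ mirrors the $5\times 27/4=33.75$ of Theorem~\ref{thm:mp-matching}, with the factor $8$ replacing $27/4$ because the matroid-intersection reduction is lossier than the matching one.

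I do not expect a genuine obstacle, since all the technical work lives in the cited lemmas; the only point demanding care is the weaker benchmark. Unlike Setting~3, Theorem~\ref{thm:bmumdSPM} does not produce a truthful deterministic mechanism but only an implementation in undominated strategies, so the theorem statement must (and does) compare the randomized revenue against the optimal deterministic mechanism implemented in undominated strategies rather than against the optimal truthful deterministic mechanism. I would also verify that the feasibility constraint $\sets=\sets_1\cap\sets_2$ is carried identically through all three results, so that Myerson's mechanism, $M^{\lotCol}$, and $\Alg$ all face the same matroid-intersection constraint; because Lemma~\ref{lem:lotSPRednGeneral} is already stated for exactly this setting, no additional reconciliation is needed and the theorem follows by composition.
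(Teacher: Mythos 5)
Your proposal is correct and matches the paper's own proof exactly: the paper also proves Theorem~\ref{thm:matroidThm} by composing Lemma~\ref{lem:randMechLotRedn}, Lemma~\ref{lem:lotSPRednGeneral}, and Theorem~\ref{thm:bmumdSPM}, giving the factor $5\times 8=40$. Your added care about the weaker benchmark (undominated strategies rather than truthfulness) and the shared feasibility constraint $\sets=\sets_1\cap\sets_2$ is exactly the right point to flag, even though the paper leaves it implicit.
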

\begin{proof}
The proof follows from Lemmas~\ref{lem:randMechLotRedn} and~\ref{lem:lotSPRednGeneral}, and Theorem~\ref{thm:bmumdSPM}. 
\end{proof}


\section{Discussion and open problems}
We show that in multi-parameter Bayesian optimal mechanism design the
benefit of randomness is only a small constant factor when agents are
unit-demand and their values for different items have little or no
correlation. We believe that this result should extend to instances
involving arbitrary positive correlation between values of a single
agent for items that are substitutes (the unit-demand constraint). For
example, it would be interesting to extend our result to the
multiplicative values model of Armstrong~\cite{Arm99}. Another open
problem is to extend our techniques beyond the unit-demand
setting. This may lead to a better understanding of and approximations
to optimal mechanism design in those settings, for which nothing is
known as yet.

\section*{Acknowledgements}
We thank Jason Hartline and Robert Kleinberg for many helpful
discussions.

\bibliographystyle{plain}
\bibliography{agt}

\begin{thebibliography}{10}

\bibitem{AFGHIS-05}
G.~Aggarwal, A.~Fiat, A.~Goldberg, J.~Hartline, N.~Immorlica, and M.~Sudan.
\newblock {Derandomization of Auctions}.
\newblock In {\em Proc. 37th ACM Symposium on the Theory of Computing}. ACM
  Press, New York, 2005.

\bibitem{Arm99}
Mark Armstrong.
\newblock Price discrimination by a many-product firm.
\newblock {\em Review of Economic Studies}, 66(1):151--68, January 1999.

\bibitem{BLP09}
Moshe Babaioff, Ron Lavi, and Elan Pavlov.
\newblock Single-value combinatorial auctions and algorithmic implementation in
  undominated strategies.
\newblock {\em J. ACM}, 56(1):1--32, 2009.

\bibitem{BBM08}
Maria-Florina Balcan, Avrim Blum, and Yishay Mansour.
\newblock Item pricing for revenue maximization.
\newblock In {\em EC '08: Proceedings of the 9th ACM conference on Electronic
  commerce}, pages 50--59, New York, NY, USA, 2008. ACM.

\bibitem{bb-06}
N.~Balcan and A.~Blum.
\newblock Approximation algorithms and online mechanisms for item pricing.
\newblock In {\em Proc. 8th ACM Conf. on Electronic Commerce}, 2006.

\bibitem{Bri-06}
P.~Briest.
\newblock Towards hardness of envy-free pricing.
\newblock Technical Report TR06-150, ECCC, 2006.

\bibitem{BCKW-10}
Patrick Briest, Shuchi Chawla, Robert Kleinberg, and S.~Matthew Weinberg.
\newblock Pricing randomized allocations.
\newblock In {\em Proc. 21st ACM Symp. on Discrete Algorithms}, Philadelphia,
  PA, USA, 2008. Society for Industrial and Applied Mathematics.

\bibitem{CHK-07}
S.~Chawla, J.~Hartline, and R.~Kleinberg.
\newblock Algorithmic pricing via virtual valuations.
\newblock In {\em Proc. 9th ACM Conf. on Electronic Commerce}, pages 243--251,
  2007.

\bibitem{CHMS-10}
Shuchi Chawla, Jason~D. Hartline, David Malec, and Balasubramanian Sivan.
\newblock Sequential posted pricing and multi-parameter mechanism design.
\newblock {\em CoRR}, abs/0907.2435, 2009.

\bibitem{GHKKKM-05}
V.~Guruswami, J.~Hartline, A.~Karlin, D.~Kempe, C.~Kenyon, and F.~McSherry.
\newblock On profit-maximizing envy-free pricing.
\newblock In {\em Proc. 16th ACM Symp. on Discrete Algorithms}, 2005.

\bibitem{AGT-HK}
Jason Hartline and Anna Karlin.
\newblock Profit maximization in mechanism design.
\newblock In Noam Nisan, Tim Roughgarden, Eva Tardos, and Vijay~V. Vazirani,
  editors, {\em Algorithmic Game Theory}, chapter~13, pages 331--362. Cambridge
  Press, 2007.

\bibitem{MV-06}
Alejandro~M. Manelli and Daniel~R. Vincent.
\newblock Bundling as an optimal selling mechanism for a multiple-good
  monopolist.
\newblock {\em Journal of Economic Theory}, 127(1):1--35, March 2006.

\bibitem{MM-88}
R.P. McAfee and J.~McMillan.
\newblock Multidimensional incentive compatibility and mechanism design.
\newblock {\em Journal of Economic Theory}, 46(2):335--354, 1988.

\bibitem{mye-81}
R.~Myerson.
\newblock Optimal auction design.
\newblock {\em Mathematics of Operations Research}, 6:58--73, 1981.

\bibitem{Pav-06}
Gregory Pavlov.
\newblock Optimal mechanism for selling substitutes.
\newblock Boston University - Department of Economics - Working Papers Series
  WP2006-014, Boston University - Department of Economics, February 2006.

\bibitem{RZ-83}
John Riley and Richard Zeckhauser.
\newblock Optimal selling strategies: When to haggle, when to hold firm.
\newblock {\em The Quarterly Journal of Economics}, 98(2):267--289, 1983.

\bibitem{RC98}
Jean-Charles Rochet and Philippe Chone.
\newblock {Ironing, Sweeping, and Multidimensional Screening}.
\newblock {\em Econometrica}, 66(4):783--826, 1998.

\bibitem{Th-04}
John~E. Thanassoulis.
\newblock {Haggling Over Substitutes}.
\newblock {\em Journal of Economic Theory}, 117(2):217--245, 2004.

\bibitem{MV-07}
D.~Vincent and A.~Manelli.
\newblock Multidimensional mechanism design: Revenue maximization and the
  multiple-good monopoly.
\newblock {\em Journal of Economic Theory}, 137(1):153--185, 2007.

\end{thebibliography}

\appendix
\subsubsection*{Gap between lottery pricings and Myerson's mechanism}
We give an example where the revenue of a lottery pricing for a single
agent BMUMD instance $\I$ is $1.13$ times the revenue of Myerson's
mechanism for the instance $\Ifc$. The instance $\I$ is defined as
follows. There is a single agent with \iid valuations for two items,
distributed according to the {\it equal-revenue} distribution, bounded
at $n$.  Formally, the valuations $\vali[1]$ and $\vali[2]$ for items
1 and 2 have cdfs $\disti[1]$ and $\disti[2]$ such that
\begin{align*}
\disti[1](x) = \disti[2](x) = \begin{cases} 1-1/x & 1 \leq x < n\\
1 & x = n
\end{cases}.
\end{align*}

For the single parameter setting $\Ifc$, an upper bound on the
expected revenue of any mechanism can be obtained by removing the
feasibility constraint of allocating to a single agent at a
time. Then, the optimal revenue with the feasibility constraint is no
more than twice the optimal revenue that can be obtained by a single
agent alone. The latter, for the equal revenue distribution, is $1$
regardless of the price charged to the agent. Therefore, the optimal
revenue for $\Ifc$ is bounded above by $2$. The same bound also
applies to the revenue of any item pricing for $\I$.

Now let us consider the following lottery pricing $\Lot$ for $\I$.
\begin{align*}
\lotCol = \{(0.5, 0.5, 2.5), (1,0,2+\frac{3n}{8}), (0,1,2+\frac{3n}{8})\}
\end{align*}
The first two coordinates in every lottery denote the probabilities
with which items 1 and 2 are offered by that lottery and the third
coordinate is the price. 

Figure~\ref{fig:allocationRegion} shows the allocation function of
this lottery pricing. In particular, $R_i$ for $i\in [3]$ is the set
of valuations where lottery $i$ is bought. The probability mass of
regions $R_2$ and $R_3$ together can be computed to be $2(4/3n+O(\log
n/n^2))$. The probability mass of region $R_1$ is $0.4+0.08\ln 4 -o(1)
\approx 0.51$. Therefore, the revenue of $\lotCol$ can be computed to be
$5/2 \cdot 0.51 + 3n/8 \cdot 8/3n + o(1) = 2.275 + o(1)$.  This is a
factor of $1.13$ higher than the optimal revenue for $\Ifc$, or the
revenue of any item pricing for $\I$.

\begin{figure}[H]
        \begin{center}
        \epsfig{file=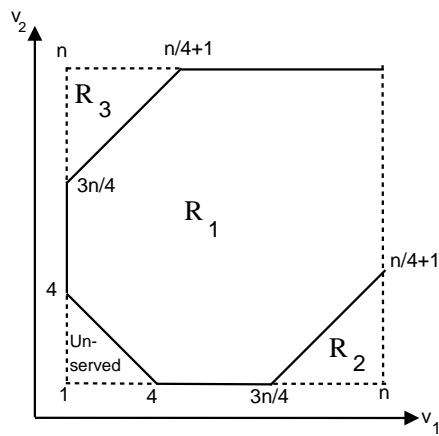,scale = 0.3}
         \end{center}
         \caption{The allocation function for the lottery pricing $\lotCol$.}         
        \label{fig:allocationRegion}
\end{figure}

\end{document}